\DeclareMathOperator{\epi}{\mathbf{epi}}
\DeclareMathOperator{\dom}{\mathbf{dom}}
\DeclareMathOperator{\Poly}{\mathbf{Poly}}
\DeclareMathOperator{\bfx}{\mathbf{x}}
\theoremstyle{plain}
\newtheorem{thm}{\textbf{Theorem}}
\newtheorem{lem}{\textbf{Lemma}}
\newtheorem{prop}{\textbf{Proposition}}
\newcommand\numberthis{\addtocounter{equation}{1}\tag{\theequation}}
\theoremstyle{definition}
\newtheorem{defn}{\textbf{Definition}}
\newtheorem{exmp}{\textbf{Example}}
\newtheorem{ass}{\textbf{Assumption}}
\theoremstyle{remark}
\newtheorem{rem}{\textbf{Remark}}
\begin{document}

\title{\LARGE \bf
Safety-Critical Control Synthesis for Network Systems with Control Barrier Functions and Assume-Guarantee Contracts
}
\author{Yuxiao Chen, James Anderson, Karan Kalsi, Aaron D. Ames, and Steven H. Low
\thanks{Yuxiao Chen,  Aaron D. Ames, and Steven Low are with the Division of Engineering and Applied Science and Engineering, Caltech,
        Pasadena, CA, 91106, USA. Emails:
        {\tt\small \{chenyx,ames,slow\}@caltech.edu}}
      \thanks{James Anderson is with the Department of Electrical Engineering and the Data Science Institute at Columbia University, New York, NY 10027, USA. {\tt\small james.andreson@columbia.edu}}
\thanks{Karan Kalsi is with Pacific Northwest National Laboratory, Richland, WA, 99352, USA. Email:
        {\tt\small Karanjit.Kalsi@pnnl.gov}}
\thanks{This work is supported by the Battelle Memorial Institute, Pacific Northwest Division, Grant \#424858.}
}

\maketitle
\begin{abstract}
This paper aims at the safety-critical control synthesis of network systems such that the satisfaction of the safety constraints can be guaranteed. To handle the large state dimension of such systems, an assume-guarantee contract is used to break the large synthesis problem into smaller subproblems. Parameterized signal temporal logic (pSTL) is used to formally describe the behaviors of the subsystems, which we use as the template for the contract. We show that robust control invariant sets (RCIs) for the subsystems can be composed to form a robust control invariant set for the whole network system under a valid assume-guarantee contract. An epigraph algorithm is proposed to solve for a contract that is valid, ---an approach that has linear complexity for sparse networks, which leads to a robust control invariant set for the whole network system. Implemented with control barrier function (CBF), the state of each subsystem is guaranteed to stay within the safe set. Furthermore, we propose a contingency tube Model Predictive Control approach based on the RCI, which is capable of handling severe contingencies, including topology changes of the network. A power grid example is used to demonstrate the proposed method. The simulation result includes both set point control and contingency recovery, and the safety constraint is always satisfied.
\end{abstract}
\vspace{-0.4cm}
\section{Introduction}\label{sec:intro}
Network control systems are sometimes subject to safety constraints that should be satisfied the entire time a system is a running. In the event that the network experiences a ``sudden or dramatic change'', either through subsystem failure, malicious attack, or unforeseen disturbance, it is imperative that the perturbed system not only maintains stability, but also still satisfies its safety constraints.

An example of such a network is the power grid. It is well known that if not controlled properly, cascading failures may lead to large-scale blackouts. The consequences of which can have a huge impact on infrastructure and economy, and in  the  worst case, lead to loss of life.

Traditional control techniques usually cannot guarantee the satisfaction of such constraints. One promising solution is ``correct-by-construction synthesis'', which has seen recent success in safety-critical applications such as vehicle control \cite{nilsson2014preliminary,chen2018validating} and robot navigation \cite{chen2018obstacle}. Correct-by-construction synthesis refers to a collection of methods (including but not limited to; barrier functions, density functions, and model checking). They are based on concepts such as reachable sets and robust control invariant sets \cite{Bla99} that ensure controllers are capable of enforcing safety constraints. Informally, a \emph{robust control invariant set} $\mathcal{S}$ is a subset of the state space, such that given a dynamical system initiated from within $\mathcal{S}$, there exists a control policy such that the system state can be kept within $\mathcal{S}$ for all future time, in the presence of disturbances. Typically, correct-by-construction control synthesis relies on computational tools such as the Hamilton Jacobi PDE \cite{mitchell2005time}, Linear Matrix Inequalities (LMIs) \cite{khlebnikov2011optimization}, and sum-of-squares (SOS) programming \cite{prajna2004nonlinear}. Unfortunately,  these methods do not scale well with the state dimension of the system. This curse of dimensionality has limited the applications of  correct-by-construction control synthesis to systems with low state dimension. There has been efforts to break ``the curse of dimensionality,'' which, at the system level, typically utilize either compositional analysis~\cite{AndP11} or symmetry \cite{nilsson2016control}.

To the best of the authors' knowledge, the synthesis of robust invariant sets for network systems with heterogeneous subsystems and strong coupling between them remains an open problem. Power grids are prominent examples of systems that exhibit the problematic phenomena just described. Typically, they consist of various types of generation buses e.g., hydroelectric, solar, and wind plants, and load buses, all coupled via transmission lines and the need to balance supply and demand whilst attaining frequency synchronization.

The approach we propose to break the curse of dimensionality is to use assume-guarantee contracts \cite{alur1999reactive} to decompose the overall performance guarantee of the network into individual contracts that each subsystem in the network agree to. Every subsystem in the network can take the performance guarantee from other subsystems as assumptions, and in turn provide its own performance guarantee, which then becomes part of the assumptions for other subsystems in the network. In this way, the big synthesis problem is decomposed into small subproblems. One related work is \cite{shamma2006decomposition}, where the authors use decentralized Lyapunov functions to quantify the coupling between subsystems, yet the computation of the Lyapunov functions was not discussed for general nonlinear systems.

The contributions of this paper are:

\noindent(i) We propose the formulation of an assume-guarantee contract approach to compute \emph{robust control invariant sets} (RCIs) for networked systems by combining subsystem RCIs with a network assume-guarantee contract.

\noindent(ii) We propose an epigraph algorithm that searches for valid assume-guarantee contracts, which has a computational complexity that scales linearly with system size (assuming the system graph is sparse or the coupling signals from multiple neighbors are summable). Moreover, the epigraph algorithm is general-purpose and can be combined with any RCI computation method to compute RCIs for network systems.

\noindent(iii) We propose a contingency tube MPC algorithm based on assume-guarantee contracts for set invariance, which is real-time implementable and is able to handle severe contingencies such as a change in the network topology.


\noindent \textit{\textbf{Nomenclature:}} $\mathbb{B}$, $\mathbb{N}$, and $\mathbb{R}$ denote the sets of binary variables,  natural numbers, and  real numbers, respectively. The $n$-dimensional Euclidean space is denoted by $\mathbb{R}^n$  and $\mathbb{R}^n_+$ denotes the non-negative orthant. Bold characters denote continuous or discrete-time signals, depending on the context, i.e.,  $\bfx= \{x(t)\}_{t=0}^{\infty}\in \mathcal{X}^\mathbb{N}$ when it is a discrete-time signal; $\bfx= \{x(t)\}_{t\in [0,\infty)}\in \mathcal{X}^\mathbb{R_+}$ when it is a continuous-time signal. $x(t)\in\mathcal{X}$ is a vector denoting the value(s) of $\bfx$ at time $t$, $\mathcal{X}^\mathbb{N}$ and $\mathcal{X}^\mathbb{R_+}$ denote the space of discrete/continuous-time signals of $x$. Given the set $\mathcal X := \mathcal X_1 \times, \mathcal X_2 \times \hdots \times \mathcal X_n$, $x\downarrow\mathcal{X}_i$ denotes the projection of $x$ onto $\mathcal{X}_i$, i.e., $x\downarrow\mathcal{X}_i=x_i$ where $x_i\in\mathcal{X}_i$. To avoid confusion between temporal signals and value iterations, we use $p[i]$ to denote the value of a parameter $p$ after the $i^{\text{th}}$ iteration. $\Poly(P,q)=\left\{x\mid Px\le q\right\}$ denotes a polytope defined with matrix $P,q$.

\vspace{-0.3cm}
\section{Problem Setup}\label{sec:setup}
In this section, we present the problem setup and show how the power grid control synthesis can be handled with the proposed method.

\vspace{-0.2cm}
\subsection{Network System Dynamics}\label{sec:network_dyn}
We consider a network dynamic system consisting of subsystems with coupling dynamics. The couplings between neighboring subsystems are treated bounded disturbances. Therefore, the following product of subsystems is considered:\footnote{In general, a networked dynamical system model would be defined over a graph structure \cite{sandell1978survey}.However, because we view the coupling between systems as bounded disturbances, we can consider a network of dynamical systems as the more simple the product system.}

\begin{equation}\label{eq:network_system}
  \Sigma = \Sigma_1 \times \Sigma_2 \times ... \times \Sigma_N.
\end{equation}
It is assumed that  each subsystem can be written in the form
\begin{equation}\label{eq:dynamic_equation}
\Sigma_i := \left\{
 \begin{aligned}
x_i^+ &= {f_i}\left( {{x_i},{y_{\mathcal{N}_i}},{u_i},{d_i}} \right),\\
{y_i} &= {c_i}\left( {{x_i}} \right), 
\end{aligned}
\right.
\end{equation}
where $x_i\in\mathcal{X}_i\subseteq\mathbb{R}^{n_i}$ is the $i^{\text{th}}$ current state and  $x_i^+$ denotes the successor state. The control input is  $u_i\in\mathcal{U}_i\subseteq\mathbb{R}^{m_i}$, the exogenous disturbance is $d_i\in\mathcal{D}_i\subseteq\mathbb{R}^{l_i}$, and $y_{\mathcal{N}_i}$ denotes the vector of signals consisting of the outputs of all of the neighboring subsystems connected to subsystem $\Sigma_i$. The vector $y_{\mathcal{N}_i}$ can be further decomposed as
\begin{equation}\label{eq:NB}
  y_{\mathcal{N}_i} = \begin{bmatrix} y_{j_1} & \hdots & y_{j_{N_i}} \end{bmatrix}^\intercal, \forall~ j_1,\hdots,j_{N_i}\in \mathcal{N}_i,
\end{equation}
where $\mathcal{N}_i$ is the neighbor set of the $i^{\text{th}}$ node with cardinality $\left| {\mathcal{N}_i} \right| = {N_i}$. The full networked system dynamics is then:
\begin{equation}\label{eq:network_dyn_form}
\resizebox{1\hsize}{!}{$
\begin{aligned}
   x^+=f(x,u,d) &= \begin{bmatrix}
{{{f}_1}({x_1},{y_{{\mathcal{N}_1}}},{u_1},{d_1})}&
 \dots &
{{{f}_N}({x_N},{y_{{\mathcal{N}_N}}},{u_N},{d_N})}
\end{bmatrix}^\intercal,\\
  c(x) &= \begin{bmatrix} c_1(x_1) & \dots & c_N(x_N) \end{bmatrix}^\intercal.
\end{aligned}
$}
\end{equation}

The overall state space and output space are denoted as $\mathcal{X}=\mathcal{X}_1 \times ...\times \mathcal{X}_N$ and $\mathcal{Y}=\mathcal{Y}_1\times ...\times \mathcal{Y}_N$, respectively. Since the method was first proposed for fixed point control, it is assumed w.l.o.g. that the equilibrium point is at the origin, i.e. $f(0,0,0)=0$  and that $c(0)=0$.

Given the dynamics, the behavior of the $i^{\text{th}}$ subsystem is uniquely determined by $x_i(0)$, $\mathbf{y}_{\mathcal{N}_i}$, $\mathbf{u}_i$ and $\mathbf{d}_i$, let $\mathcal{I}_i=\mathcal{X}_i\times \mathcal{Y}_{\mathcal{N}_i}^\mathbb{N}\times \mathcal{U}_i^\mathbb{N}\times \mathcal{D}_i^\mathbb{N}$ denote the space of input signals and initial conditions of the system $\Sigma_i$ and $\mathcal{X}_i^\mathbb{N}$ is the space of all possible state signals of $\Sigma_i$. A dynamical system $\Sigma_i\subseteq2^{\mathcal{I}_i}\times 2^{\mathcal{X}_i^\mathbb{N}}$ is understood as a subset of possible input and state signal pairs.
\begin{rem}
The results in this paper can easily be extended to the case of continuous-time dynamical systems. However, the methods we use to compute robust control invariant sets are most naturally presented in discrete-time, hence our choice.
\end{rem}

\vspace{-0.2cm}

\section{Review of Major Tools}\label{sec:tool_review}
In this section, we review the major tools necessary to our approach, including control barrier functions and parameterized assume-guarantee contracts.

\vspace{-0.3cm}
\subsection{Control Barrier Functions}\label{sec:CBF}

The computed robust control invariant set will be enforced with a control barrier function (CBF). CBFs allow safety constraints which are enforced through barrier functions to be integrated  with performance objectives encoded through control Lyapunov functions. Given a set of allowable initial conditions $\mathcal X_0$, and an unsafe set $\mathcal{X}_d$, a CBF ensures that all trajectories of a dynamical system initiated from $\mathcal X_0$ never enter $\mathcal X_d$. Typically the computation of a CBF acquired through convex programming requires an existing stabilizing control law termed  the ``legacy controller''. The controller produced by the CBF is referred to as the ``supervisory controller''.

To accommodate for the discrete-time dynamics used in this paper, we adopt the result in \cite{ames2017control,agrawal2017discrete} and utilize a discrete-time zeroing control barrier function. Specifically, given a discrete-time dynamic system:
\begin{equation*}
x^+=f(x,u,d),~x\in\mathbb{R}^n,~u\in\mathcal{U},~d\in\mathcal{D},
\end{equation*}
a discrete-time CBF is a function $h:\mathbb{R}^n\to\mathbb{R}$ that satisfies
\begin{equation}\label{eq:discrete_CBF}
  \begin{aligned}
&\forall~ x \in {\mathcal{X}_0},&h(x) \ge 0\\
&\forall~ x \in {\mathcal{X}_d},&h(x) < 0\\
&\forall~ x \in \left\{ {x\mid h(x) \ge 0} \right\}, &\forall~ d \in \mathcal{D},\exists~ u \in \mathcal{U}\;\\
&~&\mathrm{s.t.}~ h(f(x,u,d)) \ge \gamma (h(x)),
\end{aligned}
\end{equation}
where $\gamma:\mathbb{R}\to\mathbb{R}$ satisfies $s^2\ge\gamma(s)\cdot s\ge 0$, i.e. $\gamma(s)$ has the same sign as $s$ and $|\gamma(s)|\le |s|$. The supervisory controller is then implemented with the CBF QP:
\begin{equation}\label{eq:discrete_CBF_QP}
  \begin{aligned}
u^\star = \mathop {\arg \min }\limits_{u \in \mathcal{U}} &\left\| {u - {u^0}(x)} \right\|^2\\
\mathrm{s.t.}~&h(f(x,u,d)) \ge \gamma (h(x)),
\end{aligned}
\end{equation}
where $u^0$ is the legacy controller's policy. The constraint set~\eqref{eq:discrete_CBF} is not always convex. We will show later that it is convex for the special case discussed in this paper.
\begin{rem}
  For the case when the disturbance set $\mathcal D$ is known, \eqref{eq:discrete_CBF} is realizable. When the disturbance set is unknown, it is straightforward to extend \eqref{eq:discrete_CBF} to a robust CBF which can be solved using quadratic programming, see \cite{nguyen2016optimal} as an example.
\end{rem}

It can be shown that under mild conditions, a CBF ($h(\cdot)$ in~\eqref{eq:discrete_CBF}) can be constructed with a properly chosen $\gamma(\cdot)$ ($\gamma(x)\equiv0$ is always a valid choice) from a robust control invariant set (RCI) that contains $\mathcal{X}_0$ and does not intersect  $\mathcal{X}_d$.

We use robust linear programming to compute a minimal robust control invariant set for each subsystem in the network~\cite{chen2018RCI}. The algorithm is described in Appendix \ref{sec:Robust_LP_review}. Note  that the contract-based framework we propose  and the epigraph algorithm introduced in Section~\ref{sec:epigraph}, are both compatible with \emph{any}  algorithm that constructs and RCI.

The robust linear programming algorithm generates a polytopic RCI $\Poly(P,q)$, where $P$ is a constant $m\times n$ matrix and $q\in\mathbb{R}^m_{>0}$. Note that the origin is always contained in the interior of the RCI. The CBF is defined as
\begin{equation}\label{eq:CBF}
  h\left( x \right) = \mathop {\min }\limits_k \frac{{{q_k} - {P_k}x}}{{{q_k}}},
\end{equation}
where $P_k$ is the $k^\text{th}$ row of $P$ and $q_k$ is the $k^\text{th}$ entry of $q$.
\vspace{-0.4cm}
\subsection{Parameterized Signal Temporal Logic}\label{sec:pSTL}
To break the ``curse of dimensionality'' for large network systems, we use assume-guarantee contracts to decompose the synthesis problem for the whole network into smaller subproblems for the subsystems \cite{alur1999reactive,puasuareanu2008learning}.  The language of the specifications is Signal Temporal Logic (STL), which is an extension of Linear Temporal Logic (LTL) that allows for real time and predicates over real-valued signals\cite{asarin2011parametric,raman2014model}. We note that LTL deals with discrete-time signals, whereas STL uses continuous-time signals. Since the dynamics we consider in this paper are in discrete-time, we extend an STL formula to discrete-time signals by considering sample instances, as discussed in \cite{fainekos2009robustness}. This is necessary since STL's ability to allow for parameterized propositions is needed.  A Signal Temporal Logic formula $\phi:\mathcal{X}^\mathbb{R_+}\to\mathbb{B}$ uses the following grammar:
\begin{equation*}
  \phi = \top \mid \mu \mid \lnot \phi \mid \phi_1\wedge\phi_2\mid\phi_1\mathbf{U}_I \phi_2,
\end{equation*}
where $\top$ is the logical tautology, $\mu:\mathcal{X}\to\mathbb{B}$  is the space of all continuous time signal of x. is a logic proposition, $\lnot$ is Boolean negation,  $\wedge$ is the Boolean \textbf{AND}. Finally,  the ``until'' operator $\mathbf{U}$ which takes $\phi_1$ and $\phi_2$ as arguments, is true if given the interval $I$, there exists a time $t\in I$ that $\phi_2$ is true, and before $t$, $\phi_1$ is always true. When $I$ is not specified, it is assumed that by default $I=[0,\infty)$. The validity of a formula $\mu$ with respect to the signal $\bfx$ at time $t$ can be determined as  \\

\begin{tabular}{lll}
$(\bfx,t)\models \mu$                             &iff & $x(t)$ satisfies $\mu$\\
$(\bfx,t)\models \lnot\phi$                      &iff & $x(t)\not\models \phi$\\
$(\bfx,t)\models \phi_1\wedge\phi_2$             &iff & $x(t)\models \phi_1$ and $x(t)\models \phi_2$\\
$(\bfx,t)\models \phi_1\mathbf{U}_{[a,b]}\phi_2$ &iff & {\begin{tabular}{l}
                                                                $\exists~ t'\in t+[a,b]$ s.t. $x(t)\models\phi_2$ \\
                                                                and $\forall~ t''\in[t,t'], x(t)\models\phi_1$
                                                              \end{tabular} }
                                                              \vspace{0.4cm}
\end{tabular}
where $\models$ and $\not\models$ stands for ``satisfies'' and ``does not satisfy'' respectively. A signal $\bfx\models\mu$ if $(\bfx,0)\models\mu$. From the above basic grammar, one can derive additional temporal operators such as $\lozenge_I \phi \doteq \top \mathbf{U}_I \phi$, meaning ``$\phi$ is eventually true during $I$,'' and $\square_I\phi \doteq \lnot(\lozenge_I \lnot\phi)$, meaning ``$\phi$ is always true in $I$''.

Given an STL formula $\phi$, $L(\phi) = \left\{\bfx\in\mathcal{X}^\mathbb{R_+}\mid \bfx \models \phi\right\}$ is the language of the formula. A partial order is defined among STL formulas as $\phi_1 \preceq \phi_2$ if $\forall~ \bfx\in \mathcal{X}^\mathbb{R_+}, (\bfx\models \phi_1) \Rightarrow (\bfx \models \phi_2)$, or equivalently, $L(\phi_1)\subseteq L(\phi_2)$.

A Parameterized Signal Temporal Logic (pSTL) formula is an STL formula with parameters. For example, $\phi = \square_{[a,b]}(x\ge c)$ can be represented as the following pSTL: $\varphi(a,b,c) = \square_{[a,b]}(x\ge c)$, where $a,b$ and $c$ are the parameters and $\varphi:\mathbb{R}^3\to(\mathcal{X}^\mathbb{R_+}\to\mathbb{B})$ is the pSTL template. For the rest of the paper, it is assumed that all the pSTL formulas are defined on partially ordered parameter domains. Given a parameter domain $\mathcal{P}$, the partial order is denoted as $\le_{\mathcal{P}}$. For a pSTL $\varphi$ with domain $\mathcal{P}_1$, if $\mathcal{P}_1$ is a subspace of $\mathcal{P}_2$, then $\forall~ p\in\mathcal{P}_2$, $\varphi(p) = \varphi(p_\downarrow\mathcal{P}_1)$, where $\downarrow$ denotes the projection onto $\mathcal{P}_1$.

\vspace{-0.3cm}
\subsection{Assume-Guarantee Contract for Network Systems}\label{sec:network_ag}
Finally, we present a framework that builds a large assume-guarantee contract from small subcontracts, which is then used for the RCI computation whole network. We adopt the definition of assume-guarantee contract from \cite{kim2017small}:
\begin{defn}[Assume-Guarantee Contract]
An assume-guarantee contract $\mathcal{C}$ for the dynamic system  $\Sigma$ is a pair of STL formulae $[\phi_a,\phi_g]$ consisting of an assumption $\phi_a$ and a guarantee $\phi_g$ that enforces the logical implication $\phi_a \to\phi_g$.
\end{defn}
An assume-guarantee contract $\mathcal{C}=[\phi_a,\phi_g]$ is true for a dynamic system $\Sigma$ if $\Sigma \cap L(\phi_a)\subseteq L(\phi_g)$, or written compactly as $\phi_a\wedge\Sigma\to\phi_g$ with a slight abuse of notation. Note that $\Sigma$ here is understood as a proposition, interpreted as ``a trace satisfies the system dynamics''.

\begin{defn}[Parameterized Assume-Guarantee Contract]
An assume-guarantee contract $\mathcal{C}=[\phi_a,\phi_g]$ is in parameterized form if there exists pSTLs $\phi_a = \varphi_a(p_a)$, $\phi_g = \varphi_g(p_g)$ and a mapping $\lambda:\mathcal{P}_a\to \mathcal{P}_g$ such that $\mathcal{C}(p_a)=[\varphi_a(p_a),\varphi_g(\lambda(p_a))]$.
\end{defn}
%
In particular, $\phi_a$ consists of two parts: ${\phi _a} =\phi_{ae}\wedge\phi_{af} = {\varphi _{ae}}\left( {{p_{ae}}} \right) \wedge {\varphi _{af}}\left( {{p_{af}}} \right)$, where $\phi _{ae}$ is the specification for exogenous environment behavior and $\phi _{af}$ is the feedback specification, which is understood as the specification that changes with other contracts.
\begin{defn}[Parameterized Network Assume-Guarantee Contract]\label{def:network_contract}
  For a network defined in \eqref{eq:network_system}, a parameterized network assume-guarantee contract consists of individual parameterized assume-guarantee contracts $\mathcal{C}_i$ for each subsystem $\Sigma_i$. Let $p_{ae}\in\mathcal{P}_{ae}$, $p_{af}\in\mathcal{P}_{af}$ and $p_{g}\in\mathcal{P}_{g}$ be the parameters for $\varphi_{ae}$, $\varphi_{af}$ and $\varphi_{g}$. Each subcontract $\mathcal{C}_i$ consists of $\phi_a^i=\varphi_{ae}^i(p_{ae}^i)\wedge\varphi_{af}^i(p_{af}^i)$ and $\phi_g^i=\varphi_g^i(p_g^i)$. where $p_{ae}^i=p_{ae}\downarrow\mathcal{P}_{ae}^i$, $p_{af}^i=p_{af}\downarrow\mathcal{P}_{af}^i$ and $p_g^i=p_g\downarrow\mathcal{P}_g^i$. Then the network assume-guarantee contract is defined as $\mathcal{C}=[\phi_{ae}\wedge \phi_{af},\phi_g]$ with the parameter mapping $\Lambda:\mathcal{P}_{ae}\times \mathcal{P}_{af}\to\mathcal{P}_g$ and
  \begin{equation}
  \resizebox{.7\hsize}{!}{$
  \begin{aligned}
    \phi_{ae} &=&\varphi_{ae}(p_{ae}) &=&\bigwedge\limits_{i=1}^N{\phi_{ae}^i} &=&\bigwedge\limits_{i=1}^N{\varphi_{ae}^i}(p_{ae}^i), \\
    \phi_{af} &=&\varphi_{af}(p_{af}) &=&\bigwedge\limits_{i=1}^N{\phi_{af}^i} &=&\bigwedge\limits_{i=1}^N{\varphi_{af}^i}(p_{af}^i), \\
    \phi_g    &=&\varphi_g(p_g)       &=&\bigwedge\limits_{i=1}^N{\phi_g^i}    &=&\bigwedge\limits_{i=1}^N{\varphi_g^i}(p_g^i).
  \end{aligned}
  $}
  \end{equation}
\end{defn}


\section{Set Invariance with Assume-Guarantee Contracts}\label{sec:set_invariance}
We now present one of the main results of this paper, which utilizes a network assume-guarantee contract to prove set invariance for network systems.

\begin{thm}[Assume-guarantee reasoning]\label{thm:ag}
Consider the network system in \eqref{eq:dynamic_equation} associated with a parameterized network assume-guarantee contract defined in Definition \ref{def:network_contract} with parameter mapping $\Lambda$. Suppose the following are satisfied:

\noindent 1. Under the local mapping $\lambda_i:\mathcal{P}_{ae}^i\times\mathcal{P}_{af}^i\to\mathcal{P}_g^i$ for each subsystem, the following subcontract $\mathcal{C}_i:{\Sigma _i} \wedge \varphi _{ae}^i(p_{ae}^i)\wedge \varphi _{af}^i(p_{af}^i) \to \varphi _g^i(\lambda_i(p_{ae}^i,p_{af}^i))$ is satisfied for all $p_a^i \in \mathcal{P}_a^i\doteq \mathcal{P}_{ae}^i\times \mathcal{P}_{af}^i$, 

\noindent 2. There exists a mapping $\Gamma:\mathcal{P}_g\to\mathcal{P}_{af}$ such that $\varphi_g(p_g)\preceq \varphi_{af}^i(\gamma_i(p_g))$, where $\gamma_i(p_g)=\Gamma(p_g)\downarrow\mathcal{P}_{af}^i$.

\noindent 3. There exists $p_{ae}\in\mathcal{P}_{ae}$ such that $\varphi_{ae}(p_{ae})$ is true.

\noindent 4. There exists an initial feedback parameter $p_{af}[0]\in\mathcal{P}_{af}$ such that $\varphi_{af}(p_{af}[0])$ is true.

 Given $p_{ae}^i$, define $\hat{\lambda}_i(\cdot)=\lambda_i(p_{ae}^i,\cdot)$.
Let $\hat{\Lambda}(p_{af})=[\hat{\lambda}_1(p_{af}^1)^\intercal,\ \hat{\lambda}_2(p_{af}^2)^\intercal,\ ...\ \hat{\lambda}_N(p_{af}^N)^\intercal]^\intercal$,
then define recursively $p_g[k] = \hat{\Lambda}(p_{af}[k]),~  p_{af}[k+1]=\Gamma(p_g[k]).$
Under these conditions, the network system satisfies
\begin{equation}\label{eq:res_guarantee}
  \hat{\phi}_g=\bigwedge\limits_{k=0}^{\infty}{\varphi_g(p_g[k])}.
\end{equation}
\end{thm}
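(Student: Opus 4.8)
The plan is to proceed by induction on the iteration index $k$, using the feedback loop between the local subcontracts and the two monotone mappings $\hat{\Lambda}$ and $\Gamma$ to propagate the guarantee forward in time. Specifically, I would maintain the inductive hypothesis that at step $k$ the feedback specification $\varphi_{af}(p_{af}[k])$ holds along every trajectory of the network, and show two things: (a) that this forces $\varphi_g(p_g[k])$ to hold, and (b) that $\varphi_g(p_g[k])$ in turn forces $\varphi_{af}(p_{af}[k+1])$ to hold, closing the loop. Conjoining $\varphi_g(p_g[k])$ over all $k$ then yields $\hat{\phi}_g$ as in \eqref{eq:res_guarantee}.

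For the base case, Conditions 3 and 4 give $p_{ae}\in\mathcal{P}_{ae}$ with $\varphi_{ae}(p_{ae})$ true and $p_{af}[0]\in\mathcal{P}_{af}$ with $\varphi_{af}(p_{af}[0])$ true. Since $\varphi_{ae}(p_{ae})=\bigwedge_i \varphi_{ae}^i(p_{ae}^i)$ and similarly for $\varphi_{af}$, each local assumption $\varphi_{ae}^i(p_{ae}^i)\wedge\varphi_{af}^i(p_{af}^i[0])$ holds; applying Condition 1 to each subsystem $\Sigma_i$ gives $\varphi_g^i(\lambda_i(p_{ae}^i,p_{af}^i[0]))=\varphi_g^i(\hat{\lambda}_i(p_{af}^i[0]))$, and conjoining over $i$ yields $\varphi_g(p_g[0])$ with $p_g[0]=\hat{\Lambda}(p_{af}[0])$, exactly the recursion's definition. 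For the inductive step, assume $\varphi_{af}(p_{af}[k])$ holds; the same argument via Condition 1 (componentwise, using $\varphi_{ae}(p_{ae})$ which holds for all $k$) gives $\varphi_g(p_g[k])$ with $p_g[k]=\hat{\Lambda}(p_{af}[k])$. Then Condition 2 provides $\Gamma$ with $\varphi_g(p_g[k])\preceq\varphi_{af}^i(\gamma_i(p_g[k]))$ for each $i$; since $\preceq$ is language inclusion, $\varphi_g(p_g[k])$ true implies each $\varphi_{af}^i(\gamma_i(p_g[k]))$ true, and conjoining over $i$ gives $\varphi_{af}(p_{af}[k+1])$ with $p_{af}[k+1]=\Gamma(p_g[k])$, completing the induction.

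The one subtlety I would be careful about — and I expect this to be the main obstacle — is the precise treatment of the system-dynamics proposition $\Sigma$ and the fact that the subcontracts in Condition 1 are stated with the universal quantifier "for all $p_a^i\in\mathcal{P}_a^i$," which is what lets us re-invoke Condition 1 at every iteration with the updated parameter $p_{af}[k]$ rather than only at $k=0$. I would make explicit that a single trajectory $\bfx$ of the network, when projected to $\Sigma_i$, is a trajectory satisfying $\Sigma_i$, so that if it satisfies $\varphi_{ae}^i(p_{ae}^i)\wedge\varphi_{af}^i(p_{af}^i[k])$ then it satisfies $\varphi_g^i(\hat{\lambda}_i(p_{af}^i[k]))$; this is where the product-system viewpoint from \eqref{eq:network_system} and the projection notation $\downarrow$ do their work. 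A secondary point to state cleanly is that $\bigwedge_i$ of STL formulas over disjoint signal components is interpreted on the product signal, so that "$\varphi_g^i$ holds for each $i$" is literally "$\varphi_g$ holds"; with that convention fixed, the induction goes through without further computation, and the infinite conjunction in \eqref{eq:res_guarantee} is just the statement that $\varphi_g(p_g[k])$ holds for every $k\ge 0$.
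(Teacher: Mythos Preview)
Your proposal is correct and follows essentially the same approach as the paper: the paper's proof also starts from Assumptions 3 and 4 to get the base case, then writes out the infinite chain of implications $\varphi_{ae}\wedge\varphi_{af}[0]\Rightarrow\varphi_g[0]\Rightarrow\varphi_{af}[1]\Rightarrow\cdots$ using Assumptions 1 and 2 alternately, which is exactly your induction unrolled. Your version is in fact more explicit about the induction structure and the product/projection subtleties than the paper's own terse argument.
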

See the Appendix \ref{sec:proof_theorem_ag} for the proof.

\begin{figure}
  \centering
  \includegraphics[width=1\columnwidth]{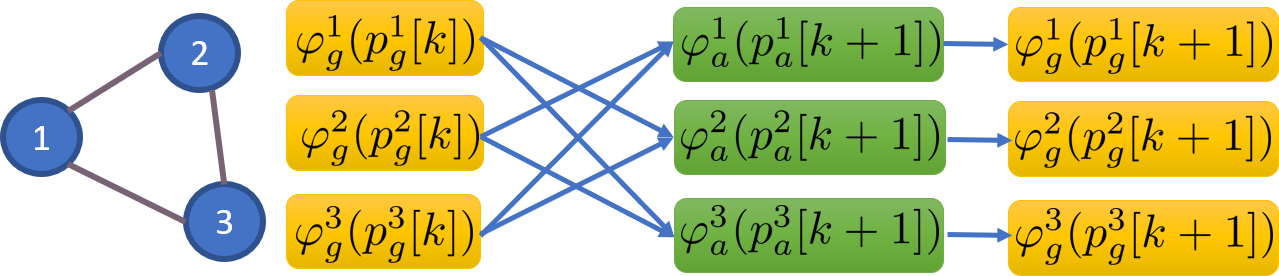}
  \caption{Network assume-guarantee contract for a 3-node network}\label{fig:3node}
  \vspace{-0.4cm}
\end{figure}

Fig. \ref{fig:3node} shows an example on a 3-node network. The guarantee of each node constitutes the assumption for the next iteration, and each node takes the assumption about its neighbors as an assumption from which a  guarantee is obtained.

Theorem \ref{thm:ag} can be viewed as the logical analogy of set invariance. If we have the recursive reasoning that propagates forward ($\phi_{af}$), and the initial logic proposition is satisfied ($\phi_{ae}$), then all the subsequent propositions are satisfied. Note that the guarantees on subsystems' behavior are shared across the network as assumptions for the next iteration.

Next, we apply Theorem \ref{thm:ag} to show set invariance of a network system. Consider a network system described by \eqref{eq:dynamic_equation}. Suppose that all subsystem outputs, $y_i$, are scalars\footnote{Scalar outputs are required for the epigraph algorithm (described on the next page) to work. Future work will relax this assumption.}, and for each subsystem $\Sigma_i$, $y_{\mathcal{N}_i}$ is treated as a disturbance. Then given a bound on $y_{\mathcal{N}_i}$: $\left|y_{\mathcal{N}_i}\right|\le y_{\mathcal{N}_i}^{\max}$, a bound $\mathcal{D}_i$ on $d_i$ and a bound $\mathcal{U}_i$ on $u_i$, any RCI algorithm can be applied to compute $\mathcal{S}_i$ for $\Sigma_i$ that satisfies
  \begin{equation*}
  \begin{array}{c}
    \forall~ x_i\in \mathcal{S}_i,\quad \forall~ d_i\in\mathcal{D}_i,\quad \forall~ \left|y_{\mathcal{N}_i}\right|\le y_{\mathcal{N}_i}^{\max}, \\
\exists~ u_i\in\mathcal{U}_i\; s.t. \quad    x_i^+={f_i}\left( {{x_i},{y_{\mathcal{N}_i}},{u_i},{d_i}} \right)\in \mathcal{S}_i.
  \end{array}
  \end{equation*}
Assume that $\mathcal{D}_i$ and $\mathcal{U}_i$ are given as part of the problem specification for all subsystems, the only information needed for the RCI computation is $y_{\mathcal{N}_i}^{\max}$. Let $\mathscr{F}$ be such a procedure that takes $y^{\max}$ as input, and computes an RCI. For clarity, we let $\mathscr{F}_i(y_{\mathcal{N}_i}^{\max})\subseteq \mathcal{X}_i$ be an RCI computed by $\mathscr{F}$ for the $i^{\text{th}}$ subsystem $\Sigma_i$, and let $\mathscr{F}(y^{\max})\doteq \mathscr{F}_1(y_{\mathcal{N}_1}^{\max})\times ...\times \mathscr{F}_N(y_{\mathcal{N}_N}^{\max})$ be the products of all the individual RCIs.

\begin{rem}
  Given a fixed procedure $\mathscr{F}$, it can be thought of as a mapping from the parameter $y^{\max}$ to the RCIs for the subsystems, which is then used to enforce constraints on the state. Note that $\mathscr{F}(y^{\max})$ is simply the product of RCIs for all the subsystems, and is not necessarily an RCI for the network system. It has to satisfy the validity condition defined later to be an RCI for the network system.
\end{rem}

\begin{defn}
  $\mathscr{F}$ is \textit{monotonic} w.r.t. $y^{\max}$ if given $y^{\max,1}\ge y^{\max,2}\ge {0}$, $\mathscr{F}(y^{\max,2})\subseteq \mathscr{F}(y^{\max,1})$. The inequality is defined element-wise. \end{defn}

\begin{lem}\label{lem:mon}
There always exists an $\mathscr{F}$ which is monotonic w.r.t. $y^{\max}$.
\end{lem}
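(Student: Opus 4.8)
The plan is to construct a monotonic $\mathscr{F}$ explicitly from \emph{any} given RCI procedure $\mathscr{F}^0$ by taking, for each input $y^{\max}$, the union (or a conservative inner approximation thereof) of all RCIs produced by $\mathscr{F}^0$ on inputs dominated by $y^{\max}$. Concretely, for each subsystem $i$ I would set $\mathscr{F}_i(y_{\mathcal{N}_i}^{\max}) := \bigcup\{ \mathscr{F}^0_i(z) \mid 0 \le z \le y_{\mathcal{N}_i}^{\max}\}$, where the inequality is element-wise. Monotonicity is then immediate: if $y^{\max,1} \ge y^{\max,2} \ge 0$, the index set in the union defining $\mathscr{F}_i(y^{\max,2}_{\mathcal{N}_i})$ is a subset of the one defining $\mathscr{F}_i(y^{\max,1}_{\mathcal{N}_i})$, so the former set is contained in the latter; taking products over $i$ preserves the inclusion.

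The substantive step is to check that the union $\mathscr{F}_i(y_{\mathcal{N}_i}^{\max})$ is still an RCI of $\Sigma_i$ against the disturbance bound $|y_{\mathcal{N}_i}| \le y_{\mathcal{N}_i}^{\max}$ (with the fixed $\mathcal{D}_i$, $\mathcal{U}_i$). Take any $x_i$ in the union; then $x_i \in \mathscr{F}^0_i(z)$ for some $0 \le z \le y_{\mathcal{N}_i}^{\max}$. Since $\mathscr{F}^0_i(z)$ is an RCI against $|y_{\mathcal{N}_i}| \le z$, and the actual disturbance set $\{|y_{\mathcal{N}_i}| \le y_{\mathcal{N}_i}^{\max}\}$ is \emph{larger}, the defining invariance property $\exists u_i \in \mathcal{U}_i$ with $f_i(x_i, y_{\mathcal{N}_i}, u_i, d_i) \in \mathscr{F}^0_i(z)$ need not hold for the full disturbance range — so a naive union does not obviously work. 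The fix is to instead take the union only over $z$ equal to the target $y_{\mathcal{N}_i}^{\max}$ itself but over a family of RCIs, or, more cleanly, to note that what we really want is monotonicity of the \emph{map}, not of each value set against varying disturbance bounds: define $\mathscr{F}_i(y_{\mathcal{N}_i}^{\max})$ as the union over all RCIs $\mathcal{S}$ of $\Sigma_i$ that are valid against the \emph{fixed} bound $y_{\mathcal{N}_i}^{\max}$. This union is itself an RCI against $y_{\mathcal{N}_i}^{\max}$ (for $x_i \in \bigcup \mathcal{S}$, pick the witnessing $\mathcal{S}$, get $u_i$ with $f_i(\cdot) \in \mathcal{S} \subseteq \bigcup \mathcal{S}$), i.e.\ it is the maximal RCI, and it is monotonic because enlarging $y_{\mathcal{N}_i}^{\max}$ only shrinks the admissible disturbance-robustness requirement on — wait, it \emph{enlarges} the disturbance set, hence shrinks the family; so in fact the maximal RCI is \emph{anti}-monotonic.

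So the correct construction reverses direction: I would define $\mathscr{F}_i(y_{\mathcal{N}_i}^{\max})$ to be the union of all sets of the form $\mathscr{F}^0_i(z)$ with $z \ge y_{\mathcal{N}_i}^{\max}$ is wrong too; rather, observe that any RCI valid against a \emph{larger} disturbance bound $z \ge y_{\mathcal{N}_i}^{\max}$ is automatically an RCI against $y_{\mathcal{N}_i}^{\max}$ (smaller disturbances, same control authority, invariance still holds). Hence set $\mathscr{F}_i(y_{\mathcal{N}_i}^{\max}) := \bigcup\{ \mathscr{F}^0_i(z) \mid 0 \le z,\ z \ge y_{\mathcal{N}_i}^{\max}\}$ — no, the index set then grows as $y^{\max}$ shrinks. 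The resolution I will present: fix a single ``reference'' procedure and simply \emph{redefine} $\mathscr{F}_i(y_{\mathcal{N}_i}^{\max}) := \bigcap\{\mathscr{F}^0_i(z) \mid z \ge y_{\mathcal{N}_i}^{\max}\}$ restricted to those $z$ for which the intersection is still nonempty and invariant. Cleaner still, and the version I would actually write down: note it suffices to exhibit \emph{one} monotonic $\mathscr{F}$, so start from the trivial RCI $\{0\}$ (valid for every $y^{\max} \ge 0$ since $f_i(0,0,0,0)=0$ and by continuity a small enough bound keeps it invariant — or use the true minimal positively invariant set) and grow it monotonically: define $\mathscr{F}_i(y_{\mathcal{N}_i}^{\max})$ as the reachable-from-origin invariant set under disturbance bound $y_{\mathcal{N}_i}^{\max}$, whose monotonicity in the bound is standard.

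\textbf{Main obstacle.} The genuine difficulty, as the scratch-work above exposes, is that ``being an RCI'' is anti-monotone in the disturbance bound while we want the RCI set-valued map to be monotone in $y^{\max}$; these pull in opposite directions, so the construction cannot be a black-box wrapper that merely unions outputs of an arbitrary $\mathscr{F}^0$. The proof must instead build $\mathscr{F}$ from the inside out — e.g.\ via forward-reachable invariant sets seeded at the equilibrium — and the work is in verifying that (i) each such set is a bona fide RCI against its nominal bound and (ii) the seeded construction is genuinely nondecreasing in $y^{\max}$. I would carry this out by: (1) fixing for each $i$ a base invariant set $\mathcal{S}_i^{\mathrm{base}} \ni 0$ valid at $y_{\mathcal{N}_i}^{\max} = 0$; (2) defining $\mathscr{F}_i(y_{\mathcal{N}_i}^{\max})$ as the largest subset of the nominal state constraint reachable from $\mathcal{S}_i^{\mathrm{base}}$ under the controlled dynamics with disturbances bounded by $y_{\mathcal{N}_i}^{\max}$ that is itself controlled-invariant; (3) checking monotonicity: a larger bound only adds admissible disturbance realizations and hence can only enlarge (never shrink) this maximal reachable invariant set, since every control-invariance certificate against the smaller bound is still usable; (4) taking products. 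Existence in step (2) follows because the family of controlled-invariant subsets is closed under union and contains $\mathcal{S}_i^{\mathrm{base}}$, so a maximal element exists.
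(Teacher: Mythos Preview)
You actually state the paper's entire argument midway through your exploration and then abandon it. The paper's proof is the single observation you wrote down: ``any RCI valid against a \emph{larger} disturbance bound $z \ge y_{\mathcal{N}_i}^{\max}$ is automatically an RCI against $y_{\mathcal{N}_i}^{\max}$ (smaller disturbances, same control authority, invariance still holds).'' The paper stops there: since $\mathscr{F}(y^{\max,1})$ is an RCI for the smaller bound $y^{\max,2}$ whenever $y^{\max,1}\ge y^{\max,2}$, one can always arrange $\mathscr{F}(y^{\max,2})\subseteq\mathscr{F}(y^{\max,1})$ --- e.g., whenever a candidate procedure fails monotonicity at $y^{\max,2}$, simply reuse the set computed at $y^{\max,1}$ as the value at $y^{\max,2}$. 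No forward-reachable-set machinery is needed.

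Your final four-step construction, by contrast, contains a genuine error. In step~(3) you assert that ``every control-invariance certificate against the smaller bound is still usable'' against the larger bound. This is exactly backwards --- it is the direction you yourself flagged as the ``anti-monotone'' one two paragraphs earlier: invariance against a \emph{smaller} disturbance set says nothing about invariance against a \emph{larger} one. Consequently the set in step~(2) (``largest controlled-invariant subset of the reachable set from $\mathcal{S}_i^{\mathrm{base}}$'') is not shown to be monotone in $y_{\mathcal{N}_i}^{\max}$, because the reachable set grows while the controlled-invariance requirement tightens, and you have not argued which effect wins. Step~(4) has the same issue: $\mathcal{S}_i^{\mathrm{base}}$ was only assumed invariant at $y_{\mathcal{N}_i}^{\max}=0$, so it need not belong to the family of controlled-invariant sets at a strictly positive bound, and the claimed nonemptiness of the union is unjustified.

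The fix is to drop steps~(1)--(4) entirely and use the one-line observation directly, as the paper does.
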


\begin{proof}
$y^{\max,1}\ge y^{\max,2}$ implies that the uncertainty set for $\mathscr{F}(y^{\max,1})$ is a superset of the uncertainty set for $\mathscr{F}(y^{\max,2})$, so $\mathscr{F}(y^{\max,1})$ is also an RCI under $\left|y\right|\le y^{\max,2}$.
\end{proof}
\noindent The lemma above is intuitive since the size of the RCI should monotonically grow with the size of the disturbance bound. Under lemma~\ref{lem:mon}, we make the following assumption.
\begin{ass}\label{ass:F_monotone}
  The RCI computation procedure $\mathscr{F}$ considered in this paper is monotonic. Lemma~\ref{lem:mon} shows that this assumption can be made without loss of generality.
\end{ass}

Given a procedure $\mathscr{F}$ that computes RCIs for subsystems given $y^{\max}$ as described above, define the local mapping $\lambda_i$:
\begin{equation}\label{eq:RCI_lambda}
\begin{aligned}
  \lambda_i(y_{\mathcal{N}_i}^{\max})&\doteq\max\limits_{x_i\in\mathscr{F}_i(y_{\mathcal{N}_i}^{\max})}\left|h_i(x_i)\right|,\\
\Lambda(y^{\max})&\doteq[\lambda_1(y_{\mathcal{N}_1}^{\max});\lambda_2(y_{\mathcal{N}_2}^{\max});...;\lambda_N(y_{\mathcal{N}_N}^{\max})].
\end{aligned}
\end{equation}
Note that $\Lambda(y^{\max})$ has the same dimension as $y^{\max}$.
 Then we have our main theorem:
\begin{thm}[Set invariance of a network system with assume-guarantee contract]\label{thm:set_invariance}
  Given an RCI computation procedure $\mathscr{F}$ and let $\Lambda$ be defined in \eqref{eq:RCI_lambda}. If there exists a $y^{\max}\in\mathbb{R}^N_+$ such that
  \begin{equation}\label{eq:validity}
    \Lambda(y^{\max})\le y^{\max},
  \end{equation}
  then $\mathscr{F}(y^{\max})$ is an RCI for the network system.
\end{thm}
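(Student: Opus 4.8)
The plan is to unwind the definition of a robust control invariant set (RCI) for the product system $\Sigma=\Sigma_1\times\cdots\times\Sigma_N$ and reduce it, coordinate by coordinate, to the RCI property that each $\mathscr{F}_i(y_{\mathcal{N}_i}^{\max})$ already enjoys as a subsystem RCI, using the validity inequality \eqref{eq:validity} as the bridge that certifies the coupling bounds assumed during each subsystem's RCI computation are actually honored by the neighbors. First I would spell out the reduction: because $f$ in \eqref{eq:network_dyn_form} acts coordinate-wise — the successor of coordinate $i$ is $f_i(x_i,y_{\mathcal{N}_i},u_i,d_i)$ — and because $\mathscr{F}(y^{\max})=\prod_{i=1}^N\mathscr{F}_i(y_{\mathcal{N}_i}^{\max})$, $\mathcal{D}=\prod_i\mathcal{D}_i$, $\mathcal{U}=\prod_i\mathcal{U}_i$, proving $\mathscr{F}(y^{\max})$ is an RCI for $\Sigma$ amounts to: for every $x=(x_1,\dots,x_N)\in\mathscr{F}(y^{\max})$ and every $d=(d_1,\dots,d_N)\in\mathcal{D}$, exhibiting a $u_i\in\mathcal{U}_i$ for each $i$ with $f_i(x_i,y_{\mathcal{N}_i},u_i,d_i)\in\mathscr{F}_i(y_{\mathcal{N}_i}^{\max})$. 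Since the single-step choice of $u_i$ affects only $x_i^+$, these $N$ subproblems decouple and may be solved independently.

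The key step is to bound the true coupling signal $y_{\mathcal{N}_i}$ seen by subsystem $i$. Fix $x\in\mathscr{F}(y^{\max})$, so $x_j\in\mathscr{F}_j(y_{\mathcal{N}_j}^{\max})$ for every $j$. For a neighbor $j\in\mathcal{N}_i$, the transmitted output satisfies $|y_j|\le\lambda_j(y_{\mathcal{N}_j}^{\max})$, because the maximum defining $\lambda_j$ in \eqref{eq:RCI_lambda} is taken over a set that contains $x_j$; and $\lambda_j(y_{\mathcal{N}_j}^{\max})\le y_j^{\max}$ by the validity condition \eqref{eq:validity}. Since $y_{\mathcal{N}_i}^{\max}$ is, by construction, exactly the subvector of $y^{\max}$ indexed by $\mathcal{N}_i$, stacking these inequalities over $j\in\mathcal{N}_i$ yields $|y_{\mathcal{N}_i}|\le y_{\mathcal{N}_i}^{\max}$ element-wise — precisely the disturbance bound under which $\mathscr{F}_i(y_{\mathcal{N}_i}^{\max})$ was certified to be an RCI for $\Sigma_i$. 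Invoking that property with the present $x_i\in\mathscr{F}_i(y_{\mathcal{N}_i}^{\max})$, $d_i\in\mathcal{D}_i$, and $y_{\mathcal{N}_i}$ produces the required $u_i\in\mathcal{U}_i$; collecting $u=(u_1,\dots,u_N)\in\mathcal{U}$ gives $f(x,u,d)\in\mathscr{F}(y^{\max})$, and since $x$ and $d$ were arbitrary this is exactly the RCI property for $\Sigma$.

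I expect the only delicate point to be the index bookkeeping around the projections: one must verify that the component of $y^{\max}$ which subsystem $i$ uses as the bound on neighbor $j$'s output coincides with the component that $\lambda_j$ upper-bounds inside the stacked map $\Lambda$ of \eqref{eq:RCI_lambda} — otherwise the scalar validity inequality $\lambda_j(y_{\mathcal{N}_j}^{\max})\le y_j^{\max}$ would not transfer into the vector bound $|y_{\mathcal{N}_i}|\le y_{\mathcal{N}_i}^{\max}$. There is no genuine circularity to resolve: although each $\mathscr{F}_j$ is computed under a bound that implicitly depends on the other subsystems' RCIs, the fixed-point-type inequality \eqref{eq:validity} is precisely the simultaneous consistency certificate closing that loop, and, as noted above, the per-step control decisions of distinct subsystems do not interact. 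Note also that monotonicity of $\mathscr{F}$ (Assumption \ref{ass:F_monotone}) is not needed for this theorem — exact consistency in \eqref{eq:validity} suffices; monotonicity is used only when one searches for a feasible $y^{\max}$ via the epigraph algorithm.
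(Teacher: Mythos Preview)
Your proof is correct and takes a genuinely different route from the paper. The paper proves Theorem~\ref{thm:set_invariance} as a corollary of Theorem~\ref{thm:ag}: it instantiates a time-parameterized contract with $\phi_{af}^i(T)=\square_{[0,T]}\bigl|y_{\mathcal{N}_i}\bigr|\le y_{\mathcal{N}_i}^{\max}$ and $\phi_g^i(\hat T)=\square_{[0,\hat T]}\,x_i\in\mathcal{S}_i$, sets $\hat\Lambda(T)=T+T_s$ and $\Gamma(\hat T)=\hat T$, verifies the four hypotheses of Theorem~\ref{thm:ag}, and concludes $\square_{[0,\infty)}\,x_i\in\mathcal{S}_i$ from the infinite conjunction~\eqref{eq:res_guarantee}. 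In essence this is induction on the discrete time index, packaged in the pSTL/contract language so that set invariance appears as a special case of the general assume-guarantee reasoning.

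You instead bypass Theorem~\ref{thm:ag} entirely and verify the one-step RCI definition for the product system directly: fix $x\in\mathscr{F}(y^{\max})$ and $d\in\mathcal{D}$, use~\eqref{eq:validity} to certify $|y_{\mathcal{N}_i}|\le y_{\mathcal{N}_i}^{\max}$, and invoke each subsystem's RCI guarantee to produce $u_i$. This is shorter, more elementary, and makes transparent exactly where~\eqref{eq:validity} enters; it also shows, as you note, that monotonicity of $\mathscr{F}$ plays no role here. What the paper's route buys is narrative: it exhibits Theorem~\ref{thm:set_invariance} as a concrete instantiation of the abstract contract machinery, reinforcing that Theorem~\ref{thm:ag} is the paper's organizing principle rather than a standalone result.
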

See the Appendix \ref{sec:proof_theorem_set_inv} for the proof.
%

The condition in \eqref{eq:validity} is the critical condition to show invariance, from hereon we refer to it as the ``\emph{validity condition}''. It can be interpreted as the condition  \emph{that each subsystem can satisfy what other nodes assume of it}. In the next section we will describe an algorithm that searches for a $y^{\max}$ that satisfies the validity condition when feasible.


\vspace{-0.3cm}
\section{Epigraph Method for Valid Contracts}\label{sec:epigraph}

In this section, we present an \emph{epigraph algorithm} that searches for an assume-guarantee contract that meets the validity condition~\eqref{eq:validity} if one exists. In particular, we show that the epigraph algorithm can be viewed as an extension of the classic \emph{small gain} theorem to network systems with nonlinear ``gains" and multiple interconnected systems.
\vspace{-0.4cm}
\subsection{Epigraph Representation of the Validity Condition}

Recall that given a function $g: \mathbb{R}^m \rightarrow \mathbb{R}$, the epigraph of $g$ is defined as
$$\epi(g) := \{(x,t)~|~ x\in \dom g,~ g(x)\le t \},$$ where $\dom g$ denotes the domain of $g$.

The idea behind our algorithm is to view each local $\lambda_i:\mathbb{R}^{N_i}_+\to\mathbb{R}_+$ as a function and consider its epigraph. Recall that $\lambda_i$ defined in \eqref{eq:RCI_lambda} denotes the mapping from the bounds on the outputs of the neighbors to the bound on the output of subsystem $i$. The condition in \eqref{eq:validity} is equivalent to the following condition:
\begin{equation*}
  [y_{\mathcal{N}_i}^{\max};y_i^{\max}]\in\epi(\lambda_i).
\end{equation*}
Note that this is a condition of only $y^{\max}$. Suppose the epigraph of each $\lambda_i$ is known, the search for a valid contract can be formulated as the following optimization:
\begin{equation}\label{eq:search_initial_contract}
  \begin{aligned}
\mathop {\min }\limits_{{y^{\max }} \ge {\bf{0}}} \;&\sum\nolimits_{i=1}^{N}{y^{\max}_i}\\
\mathrm{s.t.}~&\forall~ i = 1,...,N, \left[y_{\mathcal{N}_i}^{\max};y_i^{\max}\right]\in\epi(\lambda_i).
\end{aligned}
\end{equation}

\begin{exmp}
Consider the two systems $\Sigma_1$ and $\Sigma_2$ interconnected as shown in Fig. \ref{fig:two_interconnection}.
\begin{figure}[tb]
    \centering
    \includegraphics[width=0.6\columnwidth]{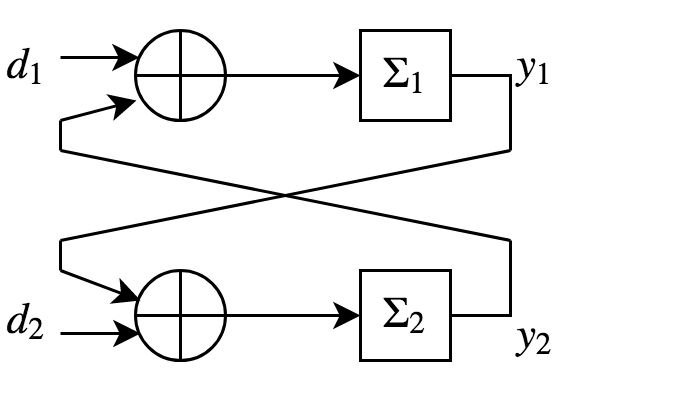}
    \caption{Two systems interconnection network}\label{fig:two_interconnection}
    \vspace{-0.4cm}
\end{figure}
\begin{figure}
  \centering
  \includegraphics[width=0.5\columnwidth]{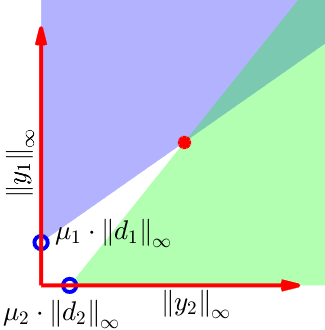}
  \caption{Epigraph view of the small gain theorem}\label{fig:epi_inter}
  \vspace{-0.5cm}
\end{figure}
Suppose that there exist constants $\mu_1,\mu_2,\nu_1, \nu_2 \ge 0$ such that
  \begin{equation}\label{eq:inter_assume}
    \begin{array}{l}
{\left\| {{y_1}} \right\|_\infty } \le {\mu _1}{\left\| {{d_1}} \right\|_\infty } + {\nu _1}{\left\| {{y_2}} \right\|_\infty },\\
{\left\| {{y_2}} \right\|_\infty } \le {\mu _2}{\left\| {{d_2}} \right\|_\infty } + {\nu _2}{\left\| {{y_1}} \right\|_\infty }.
\end{array}
  \end{equation}
If, in addition, the small gain condition is satisfied, i.e., $\nu_1\nu_2<1$, then the small gain theorem tells us that the interconnection is  stable and
\begin{equation}\label{eq:small_gain_res}
  \begin{aligned}
{\left\| {{y_1}} \right\|_\infty } &\le & \frac{{{\mu _1}}}{{1 - {\nu _1}{\nu _2}}}{\left\| {{d_1}} \right\|_\infty } &+ \frac{{{\mu _2}{\nu _1}}}{{1 - {\nu _1}{\nu _2}}}{\left\| {{d_2}} \right\|_\infty },\\
{\left\| {{y_2}} \right\|_\infty } &\le & \frac{{{\mu _1}{\nu _2}}}{{1 - {\nu _1}{\nu _2}}}{\left\| {{d_1}} \right\|_\infty } &+ \frac{{{\mu _2}}}{{1 - {\nu _1}{\nu _2}}}{\left\| {{d_2}} \right\|_\infty }.
\end{aligned}
\end{equation}
The proof can be found in \cite{desoer1975feedback}. The same result can be obtained by considering the epigraph.
\begin{prop}\label{prop:interconnected}
  Given \eqref{eq:inter_assume} and bounded $\left\| {d_i} \right\|_\infty >0 ,i=1,2$, there exists an assume-guarantee contract that guarantees \eqref{eq:small_gain_res} if $\nu_1\cdot\nu_2<1$.
\end{prop}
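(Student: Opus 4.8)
The plan is to recognize Proposition~\ref{prop:interconnected} as the two-node instance of the epigraph construction of Section~\ref{sec:epigraph}, with the small gain condition $\nu_1\nu_2<1$ being exactly what guarantees the validity condition~\eqref{eq:validity} is feasible. First I would read off from \eqref{eq:inter_assume} the two local maps playing the role of \eqref{eq:RCI_lambda}: $\lambda_1(s)=\mu_1\|d_1\|_\infty+\nu_1 s$ and $\lambda_2(s)=\mu_2\|d_2\|_\infty+\nu_2 s$, where in each case $s$ is the bound assumed on the single neighbor's output. These maps are affine and nondecreasing, so they are consistent with the monotonicity Assumption~\ref{ass:F_monotone}, their domains are all of $\mathbb{R}_+$, and each $\epi(\lambda_i)$ is a closed half-plane.

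Next I would write the validity condition~\eqref{eq:validity}, $\Lambda(y^{\max})\le y^{\max}$ with $y^{\max}=(y_1^{\max},y_2^{\max})$, as the linear system
\begin{equation*}
\mu_1\|d_1\|_\infty+\nu_1 y_2^{\max}\le y_1^{\max},\qquad \mu_2\|d_2\|_\infty+\nu_2 y_1^{\max}\le y_2^{\max}.
\end{equation*}
Replacing both inequalities by equalities and solving the resulting $2\times 2$ system yields the unique candidate
\begin{equation*}
y_1^{\max}=\frac{\mu_1\|d_1\|_\infty+\nu_1\mu_2\|d_2\|_\infty}{1-\nu_1\nu_2},\qquad y_2^{\max}=\frac{\nu_2\mu_1\|d_1\|_\infty+\mu_2\|d_2\|_\infty}{1-\nu_1\nu_2},
\end{equation*}
which is precisely the right-hand side of~\eqref{eq:small_gain_res}. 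The key point is that $\nu_1\nu_2<1$ makes the common denominator strictly positive, so this point lies in $\mathbb{R}^2_+$ and, being an equality solution, satisfies~\eqref{eq:validity}; geometrically, as in Fig.~\ref{fig:epi_inter}, the half-planes $\epi(\lambda_1)$ and $\epi(\lambda_2)$ (the latter drawn with the two coordinates swapped) are bounded by lines of slopes $\nu_1$ and $1/\nu_2$ in the $(y_2^{\max},y_1^{\max})$ plane, and they overlap inside the positive orthant as soon as $\nu_1<1/\nu_2$, i.e. $\nu_1\nu_2<1$ (the cases $\nu_1=0$ or $\nu_2=0$ being trivial).

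Having exhibited a $y^{\max}\in\mathbb{R}^2_+$ satisfying the validity condition, I would then invoke Theorem~\ref{thm:set_invariance} to conclude that the contract with subcontracts $\mathcal{C}_i=[\,\|\mathbf{y}_{\mathcal{N}_i}\|_\infty\le y_{\mathcal{N}_i}^{\max},\ \|\mathbf{y}_i\|_\infty\le y_i^{\max}\,]$ (with the obvious split of the assumption into an exogenous bound on $\mathbf{d}_i$ and a feedback bound on $\mathbf{y}_{\mathcal{N}_i}$) is valid; and I would apply Theorem~\ref{thm:ag} with the constant parameter sequence $p_g[k]\equiv y^{\max}$, valid because $y^{\max}$ is a fixed point of $\hat{\Lambda}\circ\Gamma$ here, to collapse the infinite conjunction~\eqref{eq:res_guarantee} to the single bound~\eqref{eq:small_gain_res}. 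The main obstacle is bookkeeping rather than depth: one must phrase the $\ell_\infty$-gain inequalities of~\eqref{eq:inter_assume} as pSTL guarantees over the signals $\mathbf{y}_i$ so they fit the templates $\varphi^i_{ae},\varphi^i_{af},\varphi^i_g$, check that the induced $\lambda_i$ really are the affine maps above, and verify that the partial order on the parameter domain is the usual order on $\mathbb{R}_+$ so that Assumption~\ref{ass:F_monotone} and the hypotheses of Theorems~\ref{thm:ag} and~\ref{thm:set_invariance} apply verbatim. None of these steps requires a new idea, and they also make transparent that $\nu_1\nu_2<1$ is tight: when it fails, the only solution of the equality system leaves $\mathbb{R}^2_+$, mirroring the failure of the classical small gain theorem.
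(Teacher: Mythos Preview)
Your proposal is correct and is precisely the epigraph argument the paper intends: the paper itself omits the proof (deferring to \cite{chen2018compositional}) but the surrounding discussion, Fig.~\ref{fig:epi_inter}, and the Remark make clear that Proposition~\ref{prop:interconnected} is meant to be read as the two-node, affine-$\lambda_i$ instance of the validity condition~\eqref{eq:validity}, which is exactly what you carry out. One small point of bookkeeping: Theorem~\ref{thm:set_invariance} is stated for an RCI procedure $\mathscr{F}$ and does not literally apply here, so your appeal to it is cosmetic---the substantive step is your direct verification of~\eqref{eq:validity} together with Theorem~\ref{thm:ag} (or, equivalently, a one-line induction from~\eqref{eq:inter_assume}), which already yields~\eqref{eq:small_gain_res}.
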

Due to space limitations, the  proof is omitted. It can be found in \cite{chen2018compositional}.

\end{exmp}
\begin{rem}
  This example shows that the small gain theorem can be viewed as a special case of the epigraph method, which is still applicable when $\lambda_i$ are nonlinear functions and when there are more than 2 interconnected subsystems.
\end{rem}

In practice, $\epi(\lambda_i)$ usually does not have a simple explicit form. Fortunately, we can replace $\epi(\lambda_i)$ in \eqref{eq:search_initial_contract} with a tractable inner approximation and the optimization would still generate a valid contract if feasible. The inner approximation of $\epi(\lambda_i)$ can be obtained by a grid sampling approach. To be specific; evaluate $\lambda_i$ at all the grid points (fixing $y_{\mathcal{N}_i}^{\max}$) by computing an RCI and evaluating \eqref{eq:RCI_lambda}. The approximation  of  $\epi(\lambda_i)$ is the area above the grid points in  $[y_{\mathcal{N}_i}^{\max};\lambda_i(y_{\mathcal{N}_i}^{\max})]$ space. The sampling complexity grows exponentially with the number of neighbors $N_i$. To reduce the sampling complexity, we introduce the notion  of \emph{summable signals}.
\begin{defn}
  Two or more disturbance signals are \textit{summable} if they have the same input dynamics. To be specific, consider $x^+=f(x,u,d)$, where $d=[d_1,...,d_l]^\intercal\in\mathbb{R}^l$ is the disturbance. The individual disturbances $\{d_i\}$ are summable if $\exists~ \bar{f}$ such that $f(x,u,d)\equiv\bar{f}(x,u,\sum_i d_i)$.
  \end{defn}
Summable disturbance inputs can be combined and viewed as one disturbance since they invoke the same disturbance dynamics and their bounds are summable, i.e.,
\begin{equation*}
  \left( {{{\left| {{d_1}} \right|}} \le \alpha } \right) \wedge \left( {{{\left| {{d_2}} \right|} } \le \beta } \right) \Rightarrow {\left| {{d_1} + {d_2}} \right| } = \left| d\right| \le \alpha  + \beta,
\end{equation*}
where the equality follows by definition.
Since the number of samples grows exponentially with the number of disturbance inputs, combining summable disturbance inputs reduces the complexity of the epigraph algorithm.


\vspace{-0.1cm}
\section{Power Grid Case Study}\label{sec:application}
We now apply our assume guarantee framework (and epigraph algorithm) to a power network case study.  We consider the problem of load-side primary frequency control \cite{mallada2017optimal}. The safety constraint considered is that the frequency deviation should never exceed a predefined bound. Frequency regulation is critical in power network. Modest deviations can damage electrical equipment and infrastructure (at the point of  load, generation and/or distribution), overload transmission lines leading  to market inefficiency, degrade the power quality delivered to consumers, and cause a network collapse if protective  systems kick in to protect equipment. In the US, the nominal frequency is $60$Hz and we further impose that the frequency deviation is below $0.05rad/s$. Broadly speaking, if power demand exactly matched supply, then frequency would not deviate from its set-point. However,  demand and supply can never be exactly matched; excess supply results in an increased frequency while a deficit causes frequency to decrease. Large deviations for extended periods of time will result in load shedding and potentially islanding.

 There has been a lot of effort focusing on the stability, optimality, and safety of power networks, see for example the survey paper \cite{molzahn2017survey}. Specifically, the Optimal Load Control (OLC) algorithms in \cite{zhao2014design,mallada2017optimal} provide control laws that can asymptotically track an optimal load-control problem i.e., control policy that achieves good asymptotic performance and maximizes economic benefit. To be more specific, the virtual flow method proposed in \cite{mallada2017optimal} formulates an OLC problem and derives a control policy based on a primal-dual update of the Lagrangian. However, despite good asymptotic performance, it lacks a performance guarantee in the transient phase. In particular, under contingencies such as large perturbation or topology change, the frequency deviation may exceed  the $\pm 0.05rad/s$ bound.

We shall use the OLC controller as the legacy controller to demonstrate the capability of the CBF controller proposed in Section~\ref{sec:CBF}. We will later show that robust control invariant sets with control barrier functions are a good complement to the OLC controller since it guarantees set invariance with minimum intervention and preserves the  performance of the OLC controller when the violation of safety constraints is not imminent.

\vspace{-0.4cm}
\subsection{Power Grid Dynamics}\label{sec:grid_dynamics}

We consider a transmission model of the power grid consisting of two types of buses; generators and loads. Take the IEEE 9-bus network depicted in Fig. \ref{fig:9bus_contingency} as an example. The set of generator buses are  $\mathcal G = \{1,2,3\}$, the remainder are pure load buses, the set of which is denoted by $\mathcal L$. The dynamics of the grid can be described by the following model \cite{mallada2017optimal}:
\begin{small}
\begin{align*}
  \dot{\theta}_i&=\omega_i,\numberthis \label{eq:grid_model}\\
  {M_i}{{\dot \omega }_i} &= P_i^{in} - {D_i}{\omega _i} - {r_i} - {u_i} - \sum\limits_{j \in {\mathcal{N}_i}} {\frac{{{V_i}{V_j}}}{{{X_{ij}}}}\sin } ({\theta _i} - {\theta _j}),i \in \mathcal{G}\\
  0&=P_i^{in}-D_i \omega_i - r_i-u_i -\sum\limits_{j\in\mathcal{N}_i}{\frac{{{V_i}{V_j}}}{{{X_{ij}}}}\sin } ({\theta _i} - {\theta _j}), i\in\mathcal{L},
\end{align*}
\end{small}

where $\theta_i$ and $\omega_i$ are the phase angle and frequency respectively of the voltage at bus $i$, $P_i^{in}$ and $r_i$ are the input power and uncontrollable load at bus $i$. A sudden change to either $P_i^{in}$ or $r_i$  is the main source of disturbance to the system, and the controllable load $u_i$ is used to regulate the power network. Each generator bus is modelled as a second order system with state $x_i=[\theta_i,\omega_i]^\intercal$, and $M_i$ and $D_i$ are the generator inertia and damping coefficient respectively; each load bus is modelled as a first order system with $x_i=\theta_i$ and zero inertia. $X_{ij}$ is the reactance of the circuit between bus $i$ and bus $j$. We choose the output to be $y_i=\theta_i$ since the coupling between buses occurs through the phase angles $\theta_i$. The model in \eqref{eq:grid_model} can be linearized and for each node, the subsystem dynamics $\Sigma_i$ are given by
\begin{align*}
 {\begin{bmatrix}
{\delta {{\dot \theta }_i}}\numberthis \label{eq:dyn_RCI}\\
{{{\dot \omega }_i}}
\end{bmatrix}} &=  {\begin{bmatrix}
0&1\\
{\frac{{ - \sum\limits_{j \in {\mathcal{N}_i}} {{B_{ij}}} }}{{{M_i}}}}&{\frac{{ - {D_i}}}{{{M_i}}}}
\end{bmatrix}}  {\begin{bmatrix}
{\delta {\theta _i}}\\
{{\omega _i}}
\end{bmatrix}}  +  {\begin{bmatrix}
0\\
-M_i^{-1}
\end{bmatrix}} {u_i}&\\
& +  {\begin{bmatrix}
0& \cdots &0\\
{\frac{{{B_{i{j_1}}}}}{{{M_i}}}}& \cdots &{\frac{{{B_{i{j_{{N_i}}}}}}}{{{M_i}}}}
\end{bmatrix}}  {\begin{bmatrix}
{\delta {\theta _{{j_1}}}}\\
 \vdots \\
{\delta {\theta _{{j_{{N_i}}}}}}
\end{bmatrix}} + e_i,&i \in \mathcal{G},\\
\\
\delta {{\dot \theta }_i} &= \frac{{ - \sum\limits_{j \in {N_i}} {{B_{ij}}} }}{{{D_i}}}\delta {\theta _i} + \frac{{ - 1}}{{{D_i}}}{u_i} + \frac{{\sum\limits_{j \in {\mathcal{N}_i}} {{B_{ij}}\delta {\theta _j}} }}{{{D_i}}} + {e_i},&i \in \mathcal{L}
\end{align*}
with output $y_i = \delta\theta_i$. $B_{ij}=\frac{{{V_i}{V_j}}}{{{X_{ij}}}}\cos(\theta_i^0-\theta_j^0)$ is the sensitivity of power flow to phase variations and $\theta_i^0$ is the steady-state phase angle at bus $i$. $B_{ij}\neq0$  when bus $i$ and $j$ are neighbors.

As mentioned before, the \textbf{control objective} is to prevent large frequency deviation from a set value. However, since the coupling is via the phase angle differences, in order to bound the frequency deviation, one needs to bound phase angle deviations as well. We will thus construct a robust control invariant set for both the phase angle and the frequency. The RCI should provide robustness to sudden changes of the input power $P_i^{in}$, uncontrollable load $r_i$ and the coupling between neighboring buses. We will treat the frequency deviation bound as the \textbf{safety constraint}, i.e., the danger set $\mathcal{X}_i^d$ for a generator bus $\Sigma_i$ is defined as
\begin{equation}\label{eq:X_d}
  \mathcal{X}^d_i=\left\{[\delta\theta_i,\omega_i]^\intercal\mid |\omega_i|\ge\omega^{\max}\right\},
\end{equation}
where $\omega^{\max}$ is the bound for frequency deviation. Note that the coupling between neighboring nodes happens via the phase angle, which is a scalar output. If we can use assume-guarantee contract to put bound on the phase angle deviations, we can compute an RCI for each node, which in turn constitute an RCI for the whole power grid network.

Let $\mathcal{X}_0$ be the set of allowed initial states, $\mathcal{X}_d$ be defined in \eqref{eq:X_d} for the generator buses (there is no $\mathcal{X}_d$ for pure load buses), we enforce additional constraint in the RCI computation such that the RCI does not contain $\mathcal{X}_d$, see Appendix \ref{sec:Robust_LP_review} for further details. With the polytopic RCI computed, the CBF is defined in \eqref{eq:CBF} and it can be verified that it satisfies \eqref{eq:discrete_CBF}. Then, the RCI can be enforced with the quadratic program~\eqref{eq:discrete_CBF_QP}.

\subsection{Epigraph Method for the search of valid Contracts}
To make sure that the CBF QP is always feasible, a robust control invariant set for the power network is needed. Since the goal is fixed point tracking, we use the linearized dynamics presented in \eqref{eq:dyn_RCI} for each node, and include the linearization error in the disturbance term. The assume-guarantee contract in this example follows the form introduced in Section \ref{sec:network_ag}. Each bus takes the bound on the phase angle deviation of its neighbors as the assumption, and guarantees that its own phase angle deviation stays bounded. The contract parameters are the bounds on phase angle deviation for each bus $\theta^{\max}$.

The computation of the RCI follows the robust linear programming algorithm \cite{chen2018RCI}. For each bus, the RCI is computed with the linearized model in \eqref{eq:dyn_RCI} after time discretization. The inputs to the RCI computation of the $i^{\text{th}}$ bus are the input sets $\mathcal{U}_i$ and exogenous disturbance bounds $\mathcal{D}_i$, given as the environment assumptions $\phi_{ae}^i$ (fixed), and bounds on the phase angle deviations of neighboring buses $\theta_{\mathcal{N}_i}^{\max}$, given as the feedback assumption $\phi_{af}^i$. Let $\mathscr{F}$ be the RCI computation procedure, and define
\begin{equation}\label{eq:lambda_grid}
  \lambda_i(\theta^{\max}_{\mathcal{N}_i})=\max\limits_{x_i\in\mathscr{F}(\theta^{\max}_{\mathcal{N}_i})}{\left|\theta_i\right|}.
\end{equation}
In the IEEE 9 bus example (as shown in Fig.~\ref{fig:9bus_contingency}), we add an additional constraint to $\mathscr{F}$ such that for each RCI, $\mathcal{S}_i$, computed for the generator buses,
$
  \max\limits_{x_i\in\mathcal{S}_i} \left|\omega_i\right|\le\omega^{\max},
$
so that $x_i\in\mathcal{S}_i$ implies that the safety constraint is satisfied.

By Assumption \ref{ass:F_monotone}, $\lambda_i$ is clearly monotonic. The evaluation of $\lambda_i$ is done in two steps. First, with $\theta_{\mathcal{N}_i}^{\max}$ fixed, $\mathscr{F}$ is called to compute an RCI $\mathcal{S}_i$, then $\theta_i^{\max}$ is obtained through \eqref{eq:lambda_grid}. Next, the inner approximation of $\epi(\lambda_i)$ is computed for each bus with the grid sampling algorithm, shown in Fig. \ref{fig:epi_grid_example}.
\begin{figure}[tb]
  \centering
  \includegraphics[width=0.85\columnwidth]{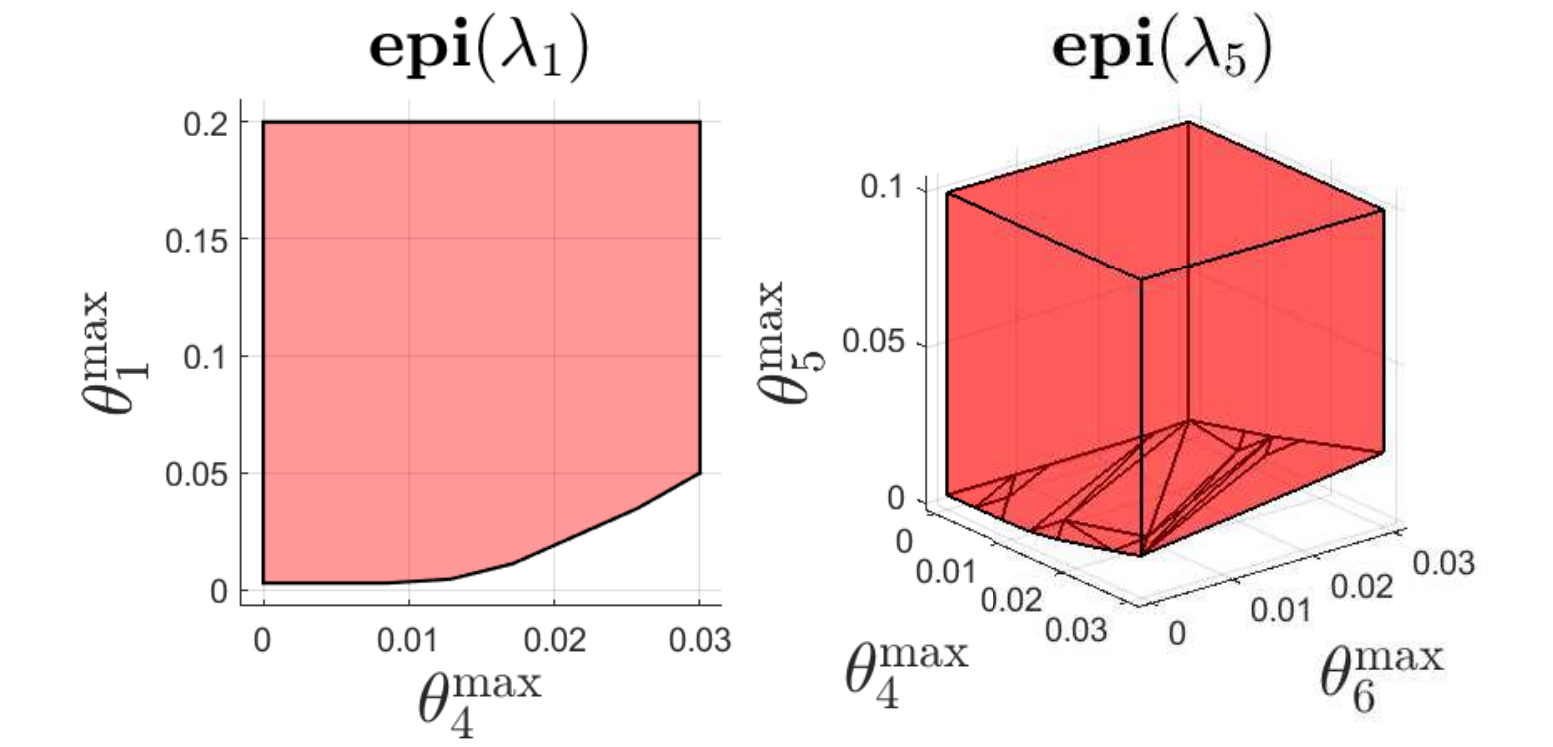}
  \caption{Inner approximations of $\epi(\lambda_1)$ and $\epi(\lambda_5)$}\label{fig:epi_grid_example}
   \vspace{-0.3cm}
\end{figure}
\begin{rem}
Clearly, the power flow from neighboring buses are summable, we treat them as separate disturbances in Fig. \ref{fig:epi_grid_example} to visualize the epigraph method with multiple non-summable disturbance inputs.
\end{rem}

As shown in Fig. \ref{fig:9bus_contingency}, bus 1 has one neighbor (bus 4) and bus 5 has two neighbors (bus 4 and 6), therefore $\epi(\lambda_1)$ is 2-dimensional whereas $\epi(\lambda_5)$ is 3-dimensional.
Once a value for $\theta^{\max}$ that satisfies the validity condition is found, it leads to a valid network assume-guarantee contract, and an RCI can be obtained via $\mathscr{F}$.


 Fig. \ref{fig:RCI} shows the robust invariant sets for the generator buses under the assume-guarantee contract, which satisfies the \textbf{safety constraint} that $|\omega_i|\le\omega^{\max}=0.05rad/s$.

\begin{figure}[tb]
  \centering
  \includegraphics[width=1\columnwidth]{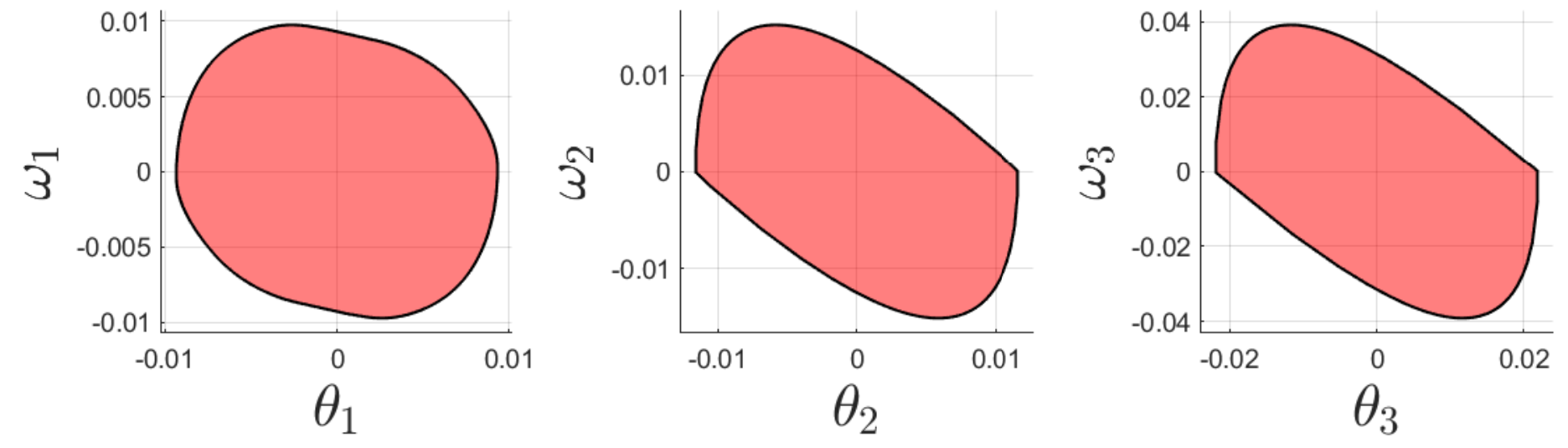}
  \caption{Robust control invariant sets for the generator buses }\label{fig:RCI}
  \vspace{-0.5cm}
\end{figure}

\vspace{-0.3cm}
\subsection{Simulation Result}\label{sec:9b_sim}

For each bus, the computed robust control invariant set is then enforced with a control barrier function as described in Section \ref{sec:CBF} with the OLC controller introduced in \cite{mallada2017optimal} used as the legacy controller providing the policy $u^0$. In Fig \ref{fig:sim_CBF1} we show a simulation trace of the 9-bus system with the CBF controller as the supervisory controller, and the \textbf{safety constraint} with $\omega^{\max}=0.05rad/s$ is never breached.

\begin{figure}[tb]
\vspace{-0.4cm}
  \centering
  \includegraphics[width=0.8\columnwidth]{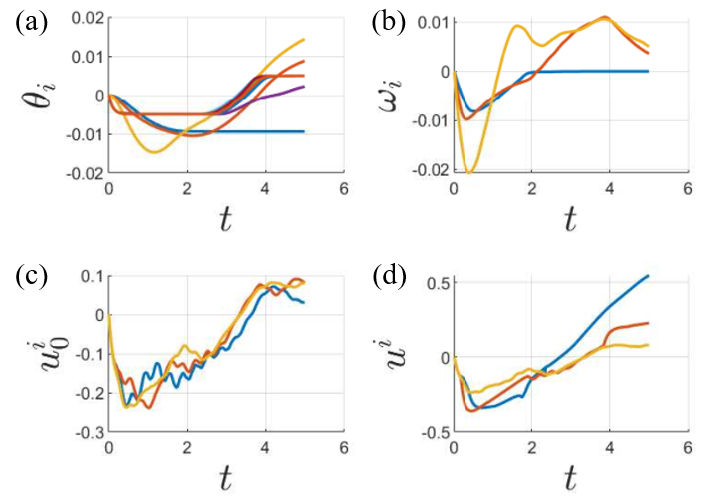}
  \caption{CBF control: Phase angle deviations of all 9 buses (a); generator frequency deviation (b); OLC legacy control (c); CBF supervisory control (d)} \label{fig:sim_CBF1}
  \vspace{-0.4cm}
\end{figure}


\begin{figure}[tb]
  \centering
  \includegraphics[width=0.8\columnwidth]{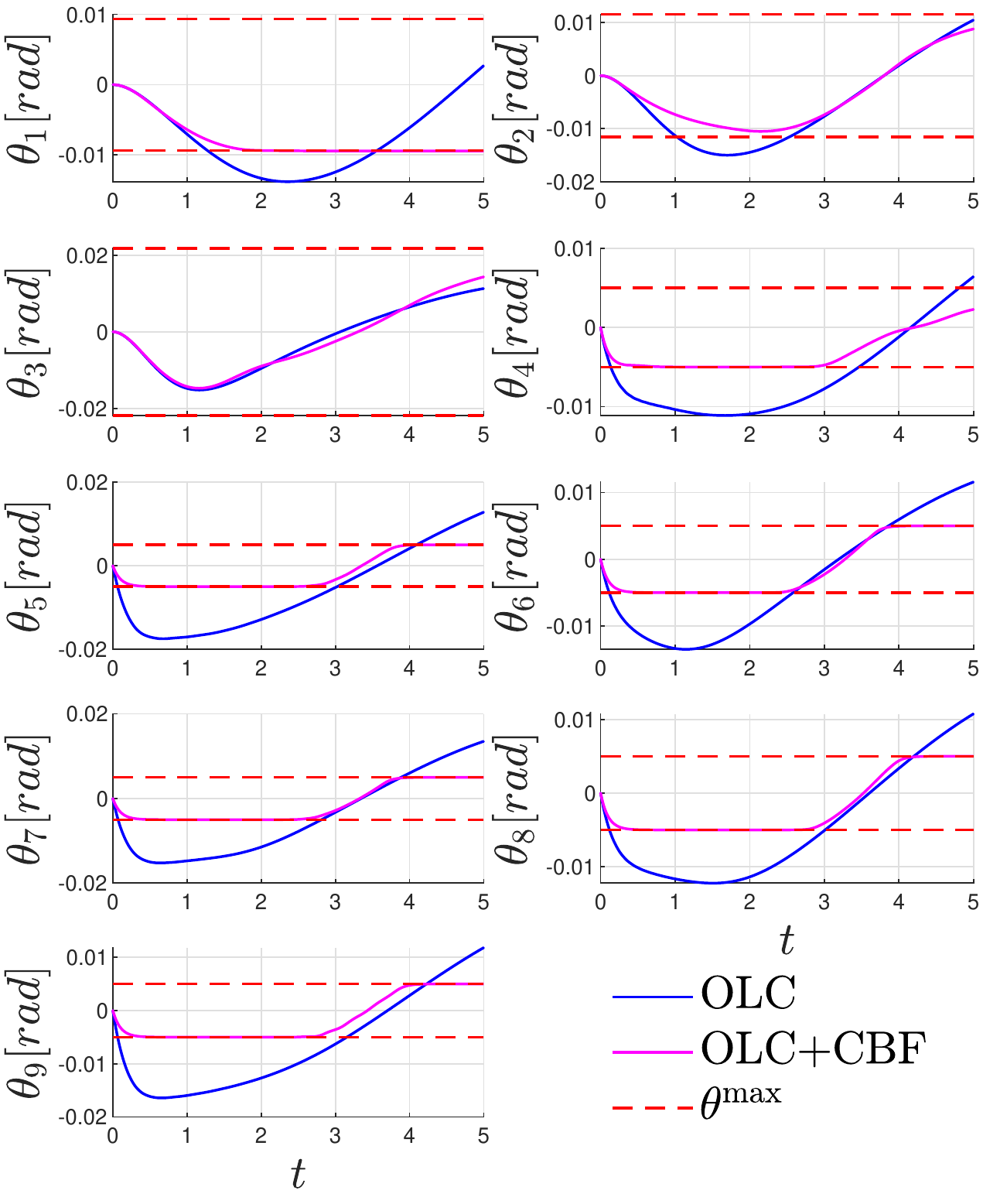}
  \caption{Phase angle plot with and without the CBF supervisor}\label{fig:sim_CBF2}
   \vspace{-0.5cm}
\end{figure}

Fig. \ref{fig:sim_CBF2} shows the phase angles with and without the CBF supervisor. Under the CBF supervisory controller (magenta plots), all phase anngles are within their respective bound determined by the contract; on the other hand, without CBF control (blue plots), there is no guarantee that the phase angles stay within bounds under the OLC policy $u^0$.

\vspace{-0.2cm}
\section{Model Predictive Control for Contingency Recovery}\label{sec:mpc}
We have shown how to compute an RCI for the network system with an assume-guarantee contract. We have shown this strategy is sufficient to guarantee the satisfaction of the safety constraint if the network operates around a fixed operating point $\theta^0$ (around which the dynamics are linearized).  However, when a severe contingency occurs such as a change in the network topology, a bus disconnects, or a line shorted, the RCI around the original operating point can no longer be maintained with the available control input, and the operating point has to change. This calls for an alternative controller that deals with the transient.

We propose a contingency tube model predictive controller that can handle the transient caused by contingency cases based on the mechanism developed for fixed point control.
\vspace{-0.4cm}
\subsection{Model Predictive Control for Reference Trajectory}\label{sec:MPC_ref}
Our MPC scheme is slightly different from the classic MPC (see for example~\cite{MPCbook16}), here we briefly review some concepts from the MPC literature and introduce our contingency tube MPC scheme.

There are two important horizons for MPC, the prediction horizon $T_p$ and the control horizon $T_c$. An MPC controller  looks ahead $T_p$ steps and represents the future state trajectory as a function of the input sequence, then solves for the optimal control sequence w.r.t. a cost function and some state and input constraints. The control sequence will be executed for $T_c$ steps, at which point another MPC iteration is executed, and a new control law computed. Traditional MPC schemes typically have $T_c\ll T_p$, often choosing $T_c=1$, which requires the controller to have access to the state information without delay.

In the network setting, distributed MPC schemes have been proposed  \cite{venkat2008distributed} that depend on fast communication and distributed optimization techniques. However, when the communication delay is not negligible,   the receding horizon scheme is likely to be infeasible. Instead, we consider a contingency tube MPC scheme that is triggered only when a contingency occurs, and the MPC nominal trajectory do not update until the end of the prediction horizon or another contingency occurs. Obviously, such an MPC scheme is equivalent to feedforward control once the MPC input is solved, and would not work without feedback. We use CBF at each node of the network as the feedback controller to guarantee the tracking performance of the reference trajectory generated by the MPC. The contingency tube MPC is designed to guarantee the safe transition of the network to the new operating point after the contingency.
\begin{figure}[tb]
  \centering
  \includegraphics[width=0.45\columnwidth]{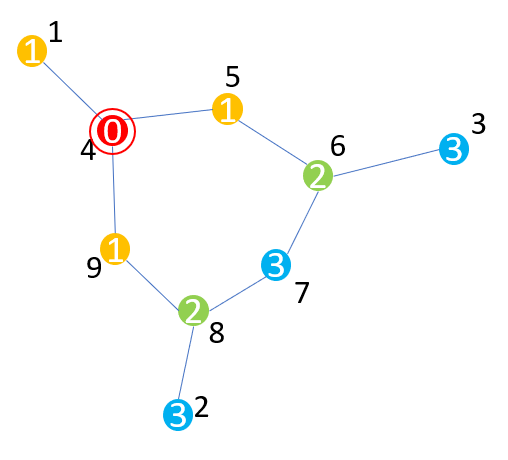}
  \caption{Contingency positions. Colors determined by delays: yellow (1-step delay), green (2-step delay), blue (3-step).}\label{fig:9bus_contingency}
  \vspace{-0.5cm}
\end{figure}
Three requirements for the MPC should be considered:
\begin{itemize}
  \item Computation of the MPC solution should be fast enough to allow real-time implementation.
  \item Safety constraints should be satisfied.
  \item Communication limitations should be respected.
\end{itemize}
Computation limitations differ with applications. In our power grid case study, in order to speed up the computation, we use the linearized model~\eqref{eq:dyn_RCI} and treat the nonlinearity  as a bounded disturbance. With a linear discrete-time model, quadratic costs and linear state and input constraints, the MPC can be solved by convex quadratic programming over the input sequence $\hat{u}(0:T_p-1)$. The MPC controller is triggered when any bus detects a contingency that exceeds the capability of the fixed point controller, such as connecting or disconnecting a bus or a line loss. To obtain the reference trajectory, the following optimization problem is solved:
\begin{equation}\label{eq:MPC_formulation}
\begin{aligned}
\min_{\hat u} \;&\mathcal{J}(\hat{u},\hat{x},x^{\star})  \\
\mathrm{s.t.}~ &{\hat{x}(t+1)} = \hat{f}(\hat{x}(t),\hat{u}(t),\hat{d}(t)),\\
&\forall~ i \in \mathcal{G}, t=0,1,...,T_p-1,\left| {{\omega _i}} \right| \le {\omega ^{\max ,ff}},\\
&\mathcal{C}\left( {{\hat{u}(0:T_p-1)}} \right) = 0,
\end{aligned}
\end{equation}
where $x^{\star}$ is the new operating point,  $\mathcal{J}$ is the cost function, which penalizes $\hat{u}$ and the distance between $\hat{x}$ and $x^\star$. $\omega ^{\max ,ff}$ is the bound on the bus frequencies for the MPC. Later we show that with CBF, the frequency tracking error is bounded by $\omega ^{\max ,fb}$. Let $\omega ^{\max}=\omega ^{\max ,ff}+\omega ^{\max ,fb}$, then the total frequency deviation is bounded by $\omega^{\max}$. $\hat{x}$ and $\hat{u}$ are the reference state and input trajectories and $\hat{d}$ is the predicted disturbance sequence, which depends on the knowledge of the contingency. In the general case $\hat{f}(x,u,d)$ is a linearization of \eqref{eq:network_dyn_form}, for this case study, the dynamics are given by  \eqref{eq:dyn_RCI}.
The set $\mathcal{C}$ is the constraint on the input caused by communication delay, which will be discussed later. The proposed scheme is based on the assumption that the network is close to a steady state when the contingency happens, therefore we can compute the reference trajectory for the whole network assuming that the system is at steady state without real-time state information. Once the MPC controller obtains a solution, the solution is sent to each node as the reference trajectory. Each node then uses a local feedback controller to track the reference trajectory.

Since the transmission of the reference trajectory is also subject to communication delay, we need the additional input constraint $\mathcal{C}$. Take the 9 bus test case as an example, suppose a contingency is detected at bus 4 and the MPC is computed at node 4. Assuming that the signal travels one edge per time-step, then the delay at each node is shown in Fig. \ref{fig:9bus_contingency}, and $\mathcal{C}$ would enforce the following input structure:
\begin{equation}\label{eq:input_structure}
\resizebox{.6\hsize}{!}{$
  \left[ {\begin{array}{*{20}{c}}
{\hat{u}_1(0:T_p-1)}\\
{\hat{u}_2(0:T_p-1)}\\
{\hat{u}_3(0:T_p-1)}\\
{\hat{u}_4(0:T_p-1)}\\
{\hat{u}_5(0:T_p-1)}\\
{\hat{u}_6(0:T_p-1)}\\
{\hat{u}_7(0:T_p-1)}\\
{\hat{u}_8(0:T_p-1)}\\
{\hat{u}_9(0:T_p-1)}
\end{array}} \right] = \left[ {\begin{array}{*{20}{c}}
0&*&*&*&*\\
0&0&0&*&*\\
0&0&0&*&*\\
*&*&*&*&*\\
0&*&*&*&*\\
0&0&*&*&*\\
0&0&0&*&*\\
0&0&*&*&*\\
0&*&*&*&*
\end{array}} \right],
$}
\end{equation}
which restricts the input to be zero before the reference trajectory signal arrives.

\vspace{-0.2cm}
\subsection{Contingency Tube MPC with CBFs}\label{sec:tube_MPC}
A local feedback controller is needed to track the reference trajectory generated by the MPC algorithm. The idea of centralized tube MPC was discussed in \cite{rakovic2011fully,yu2013tube}, and was extended to distributed tube MPC for multiple subsystems without coupling in the dynamics \cite{trodden2006robust}. There exist, however, strong coupling between nodes in the model of the grid dynamics~\eqref{eq:grid_model}. We use the assume-guarantee contract method proposed previously to handle the trajectory tracking problem for networks with strong coupling.

We assume that there exists a nominal dynamic model $\hat{f}_i$ for each subsystem in the network, and the difference between the model and the actual dynamics as described by~\eqref{eq:dynamic_equation} is bounded:
\begin{equation}\label{eq:f_hat}
  {f_i}\left( {{x_i},{y_{\mathcal{N}_i}},{u_i},{d_i}} \right)-\hat{f_i}\left( {{x_i},{y_{\mathcal{N}_i}},{u_i},{d_i}} \right)\in \mathcal{W}_{f_i},
\end{equation}
where  $\mathcal{W}_{f_i}$ is the bound for model mismatch.
The goal is to track a reference trajectory $\hat{x}(1:T_p)$ that satisfies
\begin{equation}\label{eq:ref_traj}
\begin{aligned}
  \hat x_i(t+1)  &= {{\hat f}_i}\left( {{{\hat x}_i}(t),{{\hat y}_{{\mathcal{N}_i}}}(t),{{\hat u}_i}(t),\hat{d}_i} \right),\\ \hat{y}_i(t)&=h(\hat{x}_i(t)),i=1,...,N,t=0,...,T_p-1
\end{aligned}
\end{equation}
 and keep the tracking error bounded. For the grid dynamics, $\hat{f}_i$ is linear, therefore we can write it as
 \[\hat{f}_i(x_i,y_{\mathcal{N}_i},u_i,d_i)=A_i x_i + B_i u_i + E^1_i y_{\mathcal{N}_i} + E^2_i d_i,\]
where $(A_i , B_i, E^1_i, E^2_i)$ are easily obtained from~\eqref{eq:dyn_RCI}.
Define the error $e_i=x_i-\hat{x}_i$, then the error evolves as
 \begin{equation*}
e_i^ +  = A_i e_i + B_i\Delta u_i+ E^1_i (y_{\mathcal{N}_i}-\hat{y}_{\mathcal{N}_i}) + E^2_i (d_i-\hat{d}_i) + \Delta f_i(t)
 \end{equation*}
where $\Delta u \doteq u_i-\hat{u}_i$ denotes the feedback part of the input, and $\Delta f_i(t)$ defined by
\begin{equation*}
{f_i}\left( {{{\hat x}_i}(t),{{\hat y}_{{\mathcal{N}_i}}}(t),{{\hat u}_i}(t),\hat{d}_i} \right) - {{\hat f}_i}\left( {{{\hat x}_i}(t),{{\hat y}_{{\mathcal{N}_i}}}(t),{{\hat u}_i}(t),\hat{d}_i} \right)
\end{equation*}
is the modeling error from linearization. We assume it is bounded and belongs to the set $\mathcal{W}_{f_i}$. For the  power grid case study, the error is caused by linearization of the sinusoidal functions. Since the reference trajectory has a finite duration $T_s T_p$ and $\omega_i$ is bounded by $\omega^{\max}$ for every bus, the bound on the modeling error can be obtained, c.f. Fig. \ref{fig:linearization_error}. We  now have everything in place to state the main result of this section.
\begin{figure}[tb]
  \centering
  \includegraphics[width=0.7\columnwidth]{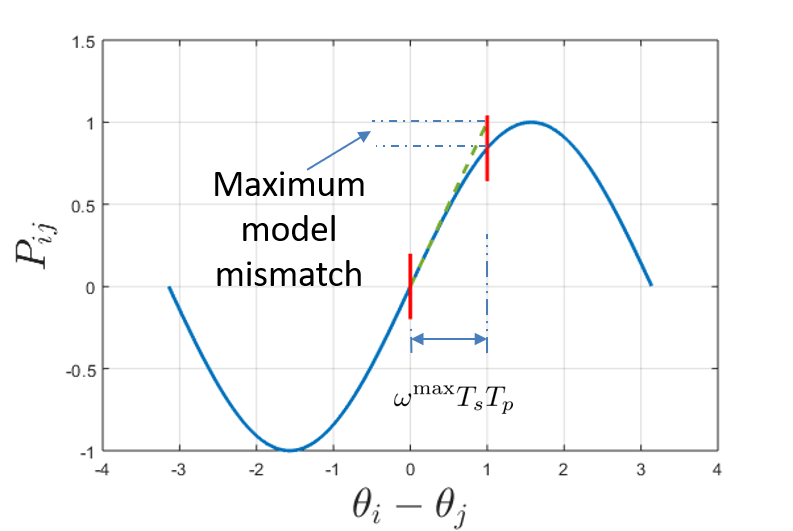}
  \caption{Linearization error}\label{fig:linearization_error}
  \vspace{-0.5cm}
\end{figure}

\begin{thm}
 Consider the power system dynamics in \eqref{eq:grid_model}, denoted as $f$, and the linearized model in \eqref{eq:dyn_RCI}, denoted as $\hat{f}$. For a reference trajectory $[\hat{x}(1:T_p),\hat{u}(0:T_p-1)]$ satisfying \eqref{eq:ref_traj}, if for each bus, there exists a feedback controller $\Delta u_i=k_i(x_i-\hat{x}_i, y_{\mathcal{N}_i}-\hat{y}_{\mathcal{N}_i},d_i-\hat{d}_i)$ such that for a given bound $\Delta \mathcal{D}_i$ of $d_i-\hat{d}_i$, a given set $\mathcal{S}_i\subseteq\mathcal{X}_i$ and a given bound on $\left|y-\hat{y}\right|\le \Delta y^{\max}$, the following is true:
  \begin{equation*}
  \begin{array}{c}
    \forall~ x_i(t)\in \hat{x}_i(t)+\mathcal{S}_i, d_i(t)\in\hat{d}_i(t)+\Delta\mathcal{D}_i\\ \forall~ \left|y_{\mathcal{N}_i}(t)-\hat{y}_{\mathcal{N}_i}(t)\right|\le \Delta y_{\mathcal{N}_i}^{\max}, \\
    x_i(t+1)={f_i}\left( {{x_i},{y_{\mathcal{N}_i}},{u_i},{d_i}} \right)\in \hat{x}_i(t+1)+\mathcal{S}_i\\
\max\limits_{x_i(t)\in\hat{x}_i(t)+\mathcal{S}_i}{\left|h_i(x_i)-\hat{y}_i\right|}\le \Delta y_i^{\max}, t=0,...,T_p-1,
  \end{array}
  \end{equation*}
  where $\Delta y_{\mathcal{N}_i}^{\max}$ is a projection of $\Delta y^{\max}$ onto $\mathcal{Y}_{\mathcal{N}_i}$ and the ``$+$'' signs between vectors and sets denote direct sums.
  Then let $u_i=\hat{u}_i+k_i(x_i-\hat{x}_i,y_{\mathcal{N}_i}-\hat{y}_{\mathcal{N}_i},d_i-\hat{d}_i)$, for any $x(0)$ satisfying $x_i-\hat{x}_i\in\mathcal{S}_i$, disturbance satisfying $d_i(t)\in\hat{d}_i(t)+\Delta \mathcal{D}_i$, the closed loop trajectory stays inside the tube defined as $\left\{x(1:T_p)~|~x(t)\in\hat{x}(t)+\mathcal{S}_1\times...\times\mathcal{S}_N\right\}$.
\end{thm}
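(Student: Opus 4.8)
The plan is to run a simultaneous induction over time $t=0,1,\ldots,T_p$ on the statement ``$e_i(t)\doteq x_i(t)-\hat{x}_i(t)\in\mathcal{S}_i$ for every $i=1,\ldots,N$'', which is exactly the tube-membership claim since $\mathcal{S}_1\times\cdots\times\mathcal{S}_N$ is the product of the subsystem error sets. This is the same assume-guarantee bookkeeping used in Theorem~\ref{thm:ag} and Theorem~\ref{thm:set_invariance}, applied here to the error dynamics on a finite horizon rather than to the state dynamics on an infinite one: the given bound vector $\Delta y^{\max}$ plays the role of the valid contract parameter $y^{\max}$, and the hypothesis $\max_{x_i\in\hat{x}_i+\mathcal{S}_i}|h_i(x_i)-\hat{y}_i|\le\Delta y_i^{\max}$ is the analogue of the validity condition $\Lambda(y^{\max})\le y^{\max}$ --- it says each bus delivers the output-deviation bound that its neighbours have assumed of it.

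First I would record the base case: by hypothesis $x_i(0)-\hat{x}_i(0)\in\mathcal{S}_i$ for every $i$, so the claim holds at $t=0$. For the inductive step, assume $e_i(t)\in\mathcal{S}_i$ for all $i$, i.e.\ $x_i(t)\in\hat{x}_i(t)+\mathcal{S}_i$. Applying the output hypothesis at each bus gives $|y_j(t)-\hat{y}_j(t)|=|h_j(x_j(t))-\hat{y}_j(t)|\le\Delta y_j^{\max}$ simultaneously for all $j$; stacking these and projecting onto $\mathcal{Y}_{\mathcal{N}_i}$ yields $|y_{\mathcal{N}_i}(t)-\hat{y}_{\mathcal{N}_i}(t)|\le\Delta y_{\mathcal{N}_i}^{\max}$ for every $i$. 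Together with the disturbance bound $d_i(t)\in\hat{d}_i(t)+\Delta\mathcal{D}_i$ (which holds by assumption on the disturbance) and with $x_i(t)\in\hat{x}_i(t)+\mathcal{S}_i$, all three antecedents of the local robust-invariance hypothesis are met, so with the chosen feedback $u_i=\hat{u}_i+k_i(x_i-\hat{x}_i,\,y_{\mathcal{N}_i}-\hat{y}_{\mathcal{N}_i},\,d_i-\hat{d}_i)$ we conclude $x_i(t+1)=f_i(x_i(t),y_{\mathcal{N}_i}(t),u_i(t),d_i(t))\in\hat{x}_i(t+1)+\mathcal{S}_i$, i.e.\ $e_i(t+1)\in\mathcal{S}_i$. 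Since this holds for every $i$, the claim propagates to $t+1$, and the induction closes at $t=T_p$, giving $x(t)\in\hat{x}(t)+\mathcal{S}_1\times\cdots\times\mathcal{S}_N$ for $t=1,\ldots,T_p$, which is precisely the stated tube.

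The part that needs care --- and the reason a subsystem-by-subsystem argument alone does not suffice --- is that $y_{\mathcal{N}_i}(t)$ depends on the states of the neighbouring buses, which are themselves only kept near their references by their own feedback laws, so one cannot advance subsystem $i$ to time $t+1$ using a bound on $y_{\mathcal{N}_i}(t)$ that has not yet been established. The resolution, as above, is that at a fixed time $t$ all the output-deviation bounds $|y_j(t)-\hat{y}_j(t)|\le\Delta y_j^{\max}$ are consequences of the single hypothesis $e(t)\in\prod_j\mathcal{S}_j$, so they are available simultaneously before any subsystem is stepped forward; the consistency of $\Delta y^{\max}$ with the projections $\Delta y_{\mathcal{N}_i}^{\max}$ and with the output hypothesis is exactly what makes the inductive step self-consistent (this is the finite-horizon, tube version of the validity condition). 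A secondary point worth spelling out is that the linearization/model-mismatch error $\Delta f_i(t)\in\mathcal{W}_{f_i}$ is already folded into the local hypothesis, since that hypothesis is stated in terms of the true dynamics $f_i$ and not the nominal $\hat{f}_i$; one should also note that input admissibility $u_i\in\mathcal{U}_i$ is taken to be guaranteed by the design of $k_i$, and that finiteness of the horizon means no infinite-time invariance argument is required.
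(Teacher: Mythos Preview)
Your proof is correct and follows essentially the same approach as the paper. The paper's own proof is a one-line appeal to Theorem~\ref{thm:set_invariance} applied to the error dynamics; your simultaneous induction over $t$ is exactly the unwound form of that citation (the recursion $T\mapsto T+T_s$ in the proof of Theorem~\ref{thm:set_invariance} becomes your step $t\mapsto t+1$), and you correctly identify the output hypothesis as the finite-horizon analogue of the validity condition~\eqref{eq:validity}.
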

\begin{proof}
  The proof can be obtained by directly applying Theorem \ref{thm:set_invariance} on the error dynamics.
\end{proof}

To implement the contingency tube MPC, we first compute an RCI for the error dynamics taking the bound on disturbance and model mismatch into account. When a contingency occurs, the MPC scheme  \eqref{eq:MPC_formulation} is solved to obtain a reference trajectory $\hat{x}$, then at each node, the following CBF supervisory control is implemented:
\begin{equation}\label{eq:MPC_CBF}
  \begin{aligned}
u_i(t)=\mathop {\arg\min }\limits_{u\in\mathcal{U}_i} \;&{\left\| {u- {u^0_i}(t)} \right\|^2}\\
s.t.\quad &\dot b_i(x_i-\hat{x}_i,u) + \kappa b_i(x_i-\hat{x}_i) \ge 0,
\end{aligned}
\end{equation}
where $b_i$ is the CBF for the $i^{\text{th}}$ node defined based on the RCI $\mathcal{S}_i$, $u^0_i$ is the nominal control signal for the $i^{\text{th}}$ node, which can be simply chosen as $\hat{u}_i$, or alternatively chosen as $\hat{u}_i$ plus a local feedback part. In the next section, the legacy control $u^0_i$ is picked as $\hat{u}_i$ plus an LQR feedback component.
\vspace{-0.2cm}
\subsection{Simulation of the Contingency Tube MPC}\label{sec:mpc_sim}
To validate the proposed contingency tube MPC scheme, we use the high-fidelity power grid simulator PST \cite{chow1992toolbox} as the simulation environment. PST allows several types of contingency cases, such as the 3-phase error, loss of line and loss of load. The New England network from PST is picked for demonstration, which contains 39 buses with 10 of them generator buses, as shown in Fig. \ref{fig:datane}. The red nodes are the generator buses and the green nodes are the pure load buses.
\begin{figure}[tb]
  \centering
  \includegraphics[width=0.60\columnwidth]{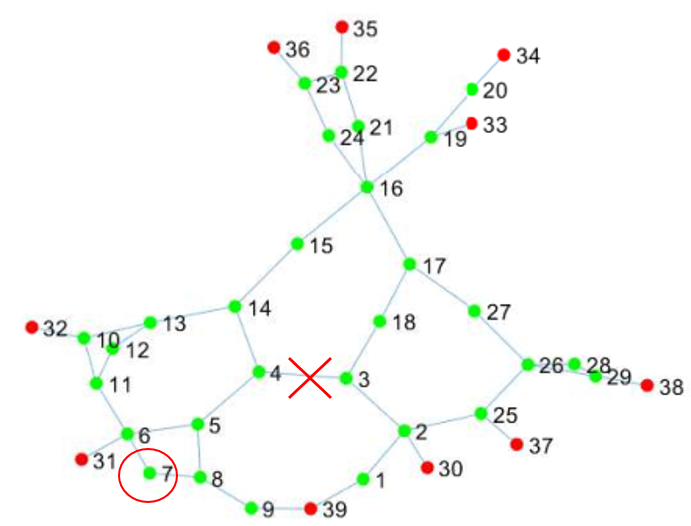}
  \caption{New England grid structure and failure locations}\label{fig:datane}
  \vspace{-0.5cm}
\end{figure}
The two tested contingencies are:
\begin{itemize}
\item \textbf{Case 1:} Load bus loss at bus 7
\item \textbf{Case 2:} Line between bus 3 and 4 trips
\end{itemize}
 When bus 7 disconnects, the network is able to find a new set point without changing the generation. When the line between bus 3 and 4 disconnects, the network cannot balance itself with the original generation. So an optimal power flow (OPF) routine (AC OPF routine in Matpower toolbox \cite{zimmerman1997matpower}) is performed to get the new generation together with the new operating point, and the contingency tube MPC is used to complete the transition to the new operating point. The sampling time and horizon for the contingency tube MPC is set at 50ms and 2.5s ($T_p=50$).

We insert sinusoidal load fluctuation with the maximum magnitude allowed by the RCI at every bus to simulate the effect of uncontrolled load disturbance. Once the contingencies (bus loss in case 1 and line loss in case 2) are detected, the contingency tube MPC kicks in at the nearest node to the contingency (bus 6 in case 1 and bus 4 in case 2) to compute the reference trajectory for the transition to the new operating points. Then the plan is then communicated across the network; the signal is assumed to travel two edges per sampling interval.
\begin{figure}[tb]
  \centering
  \includegraphics[width=0.8\columnwidth]{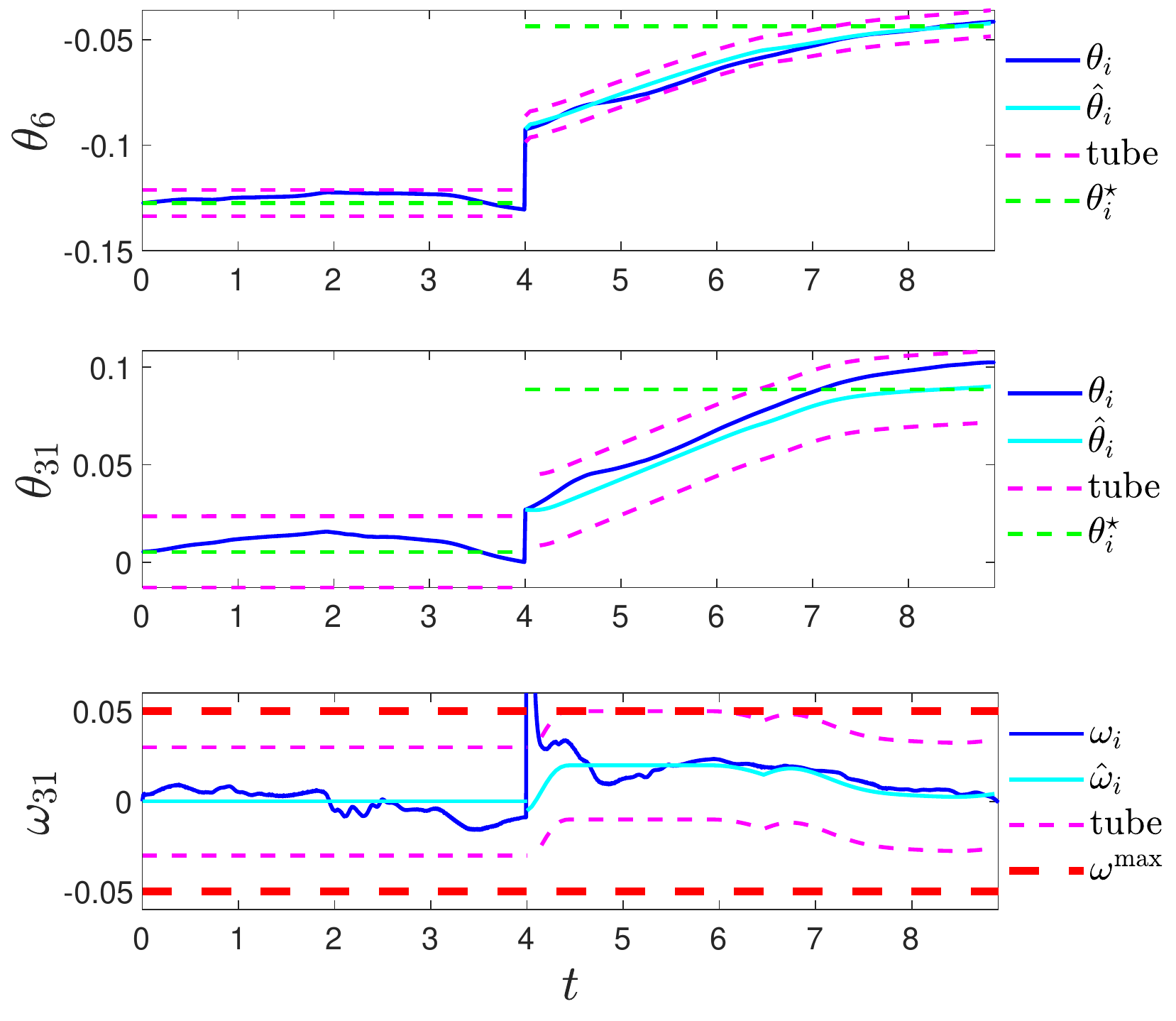}
  \caption{Case 1: Bus failure contingency}\label{fig:bus_loss_sim}
  \vspace{-0.6cm}
\end{figure}

Fig. \ref{fig:bus_loss_sim} shows the PST simulation of case 1, the load failure occurs at $t=4s$. The blue line is the state, the magenta line represents the tube (i.e. the region the state is confined to lie in), the green line represents the new set point for the phase angle and the red line represents the bound for frequency. We show the state trajectory of bus 6, the bus closest to the contingency, and bus 31, the closest generator bus to the contingency. When the contingency happens, the frequency breached the constraint for a slight moment, the reason for this violation of the safety constraint are (i) the dynamics under the contingency are not modeled accurately (ii) the violation happened instantly after the loss of bus 7, before the contingency tube MPC is able to kick in and react. Once the contingency tube MPC scheme kicks in, the state trajectory was kept within the tube and the network eventually reaches the new operating point without violating the safety constraint. In practice, infrequent small violations over very small time periods are tolerated.
\begin{figure}[tb]
  \centering
  \includegraphics[width=0.8\columnwidth]{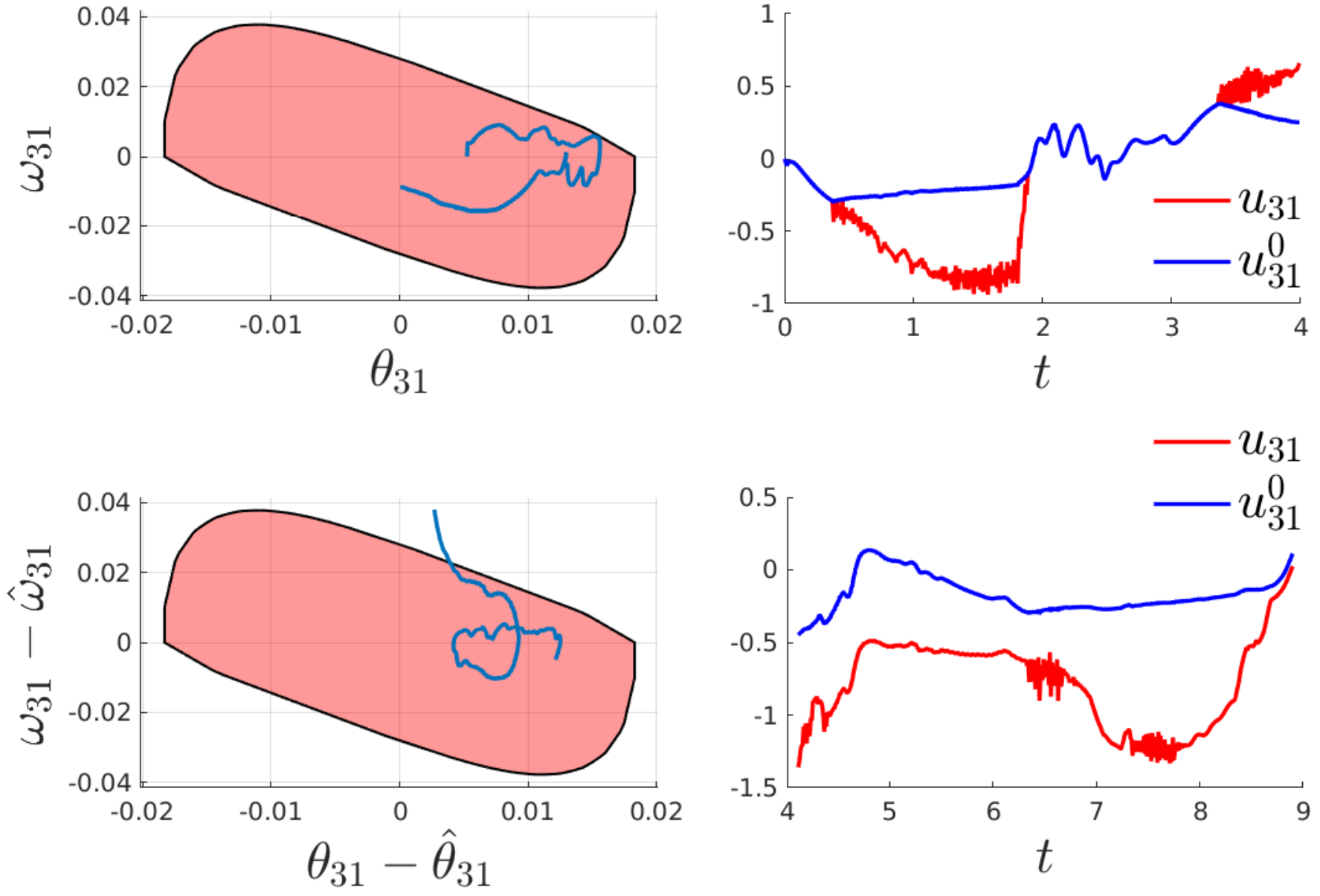}
  \caption{Robust Control Invariant set with state and input trajectories at bus 31}\label{fig:RCI_sim_plot}
  \vspace{-0.4cm}
\end{figure}
\begin{figure}[tb]
  \centering
  \includegraphics[width=0.8\columnwidth]{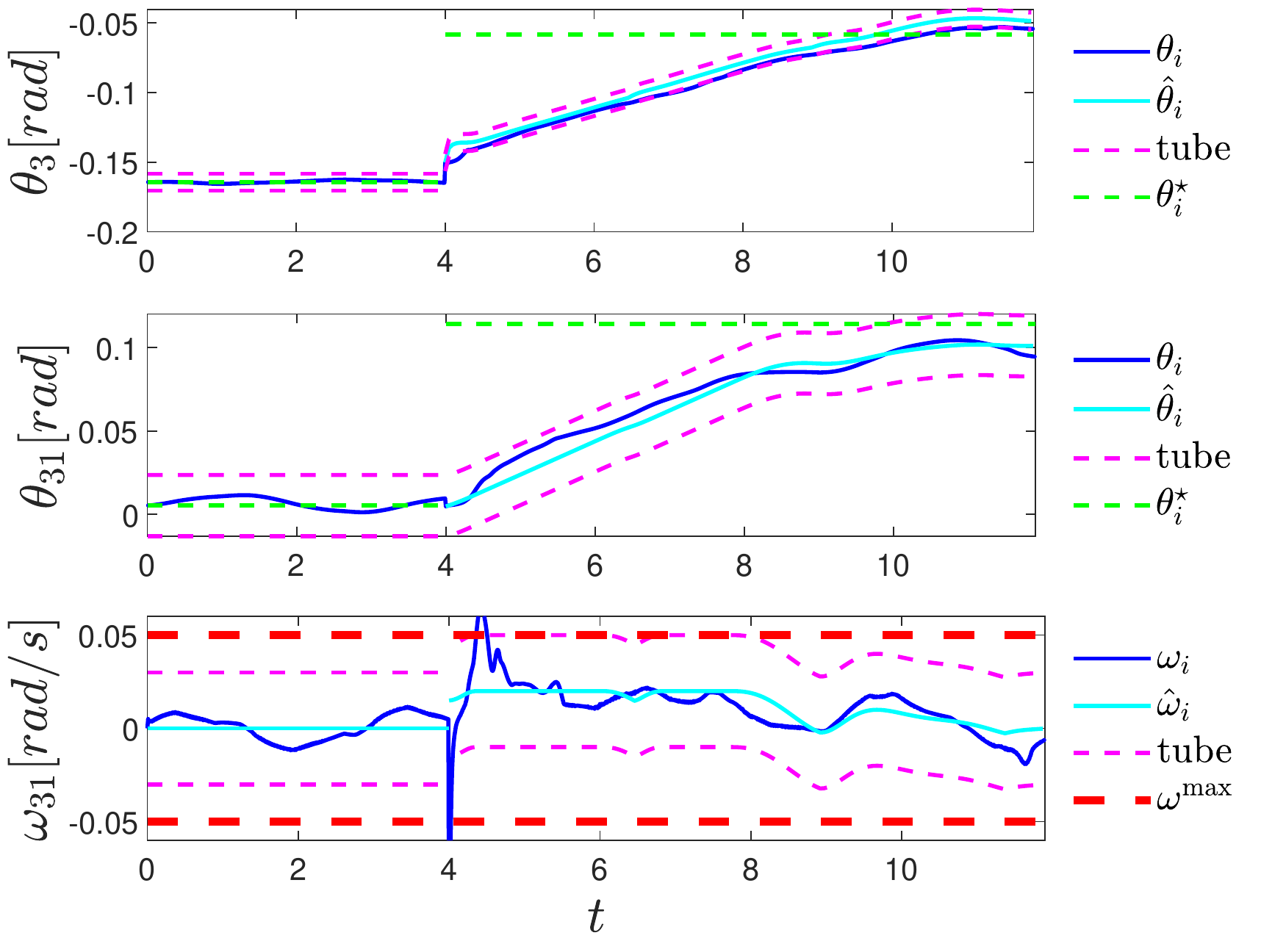}
  \caption{Case 2: Line failure contingency }\label{fig:line_loss_sim}
  \vspace{-0.5cm}
\end{figure}
In Fig. \ref{fig:RCI_sim_plot} we plot the state trajectory w.r.t. the RCI and the inputs to the system at bus 31 (generator bus). The two figures on top show the state and input trajectories before the contingency at $t=4s$. Due to the sinusoidal fluctuation of the load, the phase angle also fluctuates, but it never left the RCI; Fig. \ref{fig:RCI_sim_plot}(a). In Fig. \ref{fig:RCI_sim_plot}(b), the blue curve shows the legacy controller input, and the red curve shows the CBF controller input. The timing of the interventions coincide with the timing when the state is close to the boundary of the RCI. Fig. \ref{fig:RCI_sim_plot}(c) and (d) show the state and input trajectories after the contingency. Note that in the contingency tube MPC scheme, we require the error state $x-\hat{x}$ instead of the state $x$ to stay inside the RCI. After the temporary deviation right after the contingency, $x-\hat{x}$ stays inside the RCI due to the CBF controller.

Fig. \ref{fig:line_loss_sim} shows the simulation for the line loss case. Similarly, the contingency tube MPC together with the CBF controller is able to keep the system trajectory within the tube and take the whole network to the new operating point.

\vspace{-0.2cm}
\section{Conclusion}\label{sec:conclusion}
We consider the application of robust control invariant set and control barrier functions on network systems to prevent large deviations from the desired working condition. The key idea is to use assume-guarantee contracts to break the large network into small subsystems. The coupling between subsystems are treated as bounded disturbances which are handled with a network assume-guarantee contract. We show that a network assume-guarantee contract satisfying the validity condition guarantees robust set invariance for the whole network system. Furthermore, we propose an epigraph algorithm that searches for a valid contract, and enjoys linear complexity when the network is sparse or the coupling terms are summable. Based on the network assume-guarantee contract idea, we further propose a contingency tube MPC scheme that is capable of handling contingencies with changing operating points while respecting communication limitations. 

\appendices
\section{Robust Linear Programming for RCI Computation}\label{sec:Robust_LP_review}
We briefly review the robust linear programming algorithm for minimal robust control invariant set (mRCI) computation proposed in \cite{chen2018RCI}. A set is robust control invariant if there exists a controller that keeps any trajectory starting within the set inside the set under all possible disturbances.
\begin{defn}
Given a discrete-time dynamical system:
\begin{equation}\label{eq:mRCI_model}
  {x^ + } = f(x,u,w), \quad x\in\mathbb{R}^n,u\in\mathcal{U},w\in \mathcal{W}
\end{equation}
where $x$, $u$, and $w$ are the state, control input, and disturbance. A set $\mathcal{S}\subseteq\mathbb{R}^{n}$ is \textit{robust control invariant} if
$\forall x\in \mathcal{S},~\forall w\in\mathcal{W},~\exists~ u\in\mathcal{U} \quad  \mathrm{s.t.}\quad x^+=f(x,u,w)\in \mathcal{S}.$

\end{defn}
In addition, we assume $w=[w^m;w^u]$, where $w^m$ and $w^u$ are the measured and unmeasured disturbances, respectively. The control policy can depend on $w^m$, but not on $w^u$. 

The mRCI algorithm assumes a discrete-time linear model:
\begin{equation}\label{eq:mRCI_linear_model}
  x^+=Ax + Bu + Ew.
\end{equation}
The RCI takes a polytopic form $\Poly(P,q)$ with the hyperplane orientation fixed to $P$. A one-step propagation  computes a new polytope $\Poly(P,q^+)$ that contains all possible $x^+$ with $x\in\Poly(P,q)$, and $w\in\mathcal{W}$ under the dynamics in \eqref{eq:mRCI_linear_model}. We assume the control law takes the form  $u = {K_{ff}}w^m + {K_{fb}}x$, but note that once the mRCI is computed, it is enforced by control barrier functions, as reviewed in Section \ref{sec:CBF}. The linear control law here is simply used to show that there exists a control strategy that renders the set robustly control invariant, it does not have to be implemented.

The one-step propagation is formulated as a robust linear programming problem by assuming a linear form of the control law $u = {K_{ff}}d + {K_{fb}}x$. Robust linear programming (with polytopic uncertainty) is then solved via linear programming  using dualization. Thus, it enjoys order constant complexity.

In the power grid case study, with dynamics $\Sigma_i$ described by \eqref{eq:dyn_RCI}, we assume that phase angles of the neighboring nodes and the local generation and uncontrolled load are measured disturbances. The unmeasured disturbance is due to the communication delay between the neighboring nodes. Suppose  node $i$ and $j$ are neigbors, the bound on frequency is $\omega^{\max}$, and the time delay of communication is $\tau$. Then the maximum difference between the actual value of $\theta_j$ and the value used for feedback is $\omega^{\max}\tau$. The bound of the unmeasured disturbance for the $i^{\text{th}}$ node $w_i^u$ is then given as
$\left| {{w_i^u}} \right| \le \left| {{B_i}K_{ff}^i} \right|{\omega ^{\max }}\tau
$
where $B_i$ and $K_{ff}^i$ are the input matrix and feedforward gain of the $i^{\text{th}}$ node.
The following one-step propagation solves for a polytopic set $\Poly(P,q^+)$ that contains all possible $x^+$ with $x\in\Poly(P,q)$ and $w\in\mathcal{W}$:
\begin{equation}\label{eq:one_step}
\resizebox{.85\hsize}{!}{$
\begin{aligned}
  \mathop {\min }\limits_{{K_{ff}},{K_{fb}} ,{q^ + }} \;&{c^\intercal }{q^ + }\;\hfill \\
 \rm{s.t.}~~~ &\forall~ x \in \Poly(P,q),\forall~ w \in \mathcal{W}, \hfill \\
  &P\left( {A x + B\left( {K_{ff}^\intercal w^m + K_{fb}^\intercal x} \right) + E w} \right)\le {q^ + }, \hfill \\
  &K_{ff}^\intercal w^m + K_{fb}^\intercal x \in \mathcal{U}, \hfill \\
\end{aligned}
$}
\end{equation}
which is solvable with robust linear programming \cite{bertsimas2011theory}.

\begin{rem}
  We enforce an additional constraint that for the generator buses, the frequency stays bounded $|\omega_i|\le\omega^{\max}$, which is easily enforced as a constraint on $q^+$.
\end{rem}
Optimization~\eqref{eq:one_step} solves the one-step propagation problem, which is embedded in an iterative algorithm to find a minimal robust control invariant set. The  algorithm is initiated from a small $q$ and iteratively updates $q$ with $q^+$. If $q^+\le q$, then $\mathcal{P}(P,q)$ is robustly control invariant, and the algorithm terminates, as shown in Algorithm \ref{alg:io}.
\begin{algorithm}
    \caption{Robust LP algorithm for mRCI}
    \label{alg:io}
    \begin{algorithmic}[1] 
        \Procedure{RCI-IO}{$\Sigma$,\ $P$,\ $q^0$,\ $\mathcal{W}$,\ $\mathcal{U}$,\ $\epsilon$}
            \State $q\gets q^0$
            \Do
                \State \resizebox{.8\hsize}{!}{$\begin{gathered}
                        {\text{Find }}\left[ {{q^ + } ,{K_{ff}},{K_{fb}}} \right]\;{\text{s}}{\text{.t}}{\text{. }}\hfill \\
                        \forall~ x \in \Poly(P,q),\forall~ w \in \mathcal{W}, K_{ff}w^m+K_{fb}x\in\mathcal{U},\hfill \\
                        {x^ + } \in \Poly(P,q^+ -\epsilon \mathbf{1}_L) \hfill \\
                        \end{gathered}$}
                \State $q \gets q^+$
            \DoWhile {$q^+\le q+\epsilon \mathbf{1}_L$}
            \State \textbf{return} $\left[ {q ,{K_{ff}},{K_{fb}}} \right]\;$
        \EndProcedure
    \end{algorithmic}
\end{algorithm}
\vspace{-0.3cm}
\section{Proof of Theorem \ref{thm:ag}}\label{sec:proof_theorem_ag}

\noindent  By assumption 3 and 4, $p_{ae}^i$ and $p_{af}^i[0]$ exists so that $\phi_{ae}^i$ and $\phi_{af}^i[0]$ are satisfied. By assumption 1, define the following recursion:
\begin{equation}\label{eq:recursion}
\begin{aligned}
  p_g[k] &= \hat{\Lambda}(p_{af}[k]) \\
  p_{af}[k+1]&=\Gamma(p_g[k]).
\end{aligned}
\end{equation}
Then, we can build an infinite sequence of STLs that the network system satisfies from assumption 1, 2, and \eqref{eq:recursion}:
  \begin{equation*}
  \begin{array}{l}
    \bigwedge\limits_{i=1}^{N}{\varphi_{ae}^i(p_{ae}^i)} \wedge \bigwedge\limits_{i=1}^{N}{\varphi_{af}^i(p_{af}^i[0])} \wedge\\
    \left( \bigwedge\limits_{i=1}^{N}{\varphi_{ae}^i(p_{ae}^i)} \wedge \bigwedge\limits_{i=1}^{N}{\varphi_{af}^i(p_{af}^i[0])} \Rightarrow \bigwedge\limits_{i=1}^{N}{\varphi_g^i(p_g^i[0])}\right)\wedge \\
    \left(\bigwedge\limits_{i=1}^{N}{\varphi_g^i(p_g^i[0])} \Rightarrow \bigwedge\limits_{i=1}^{N}{\varphi_{af}^i(p_{af}^i[1])} \right)\wedge\\
    ...
  \end{array}
\end{equation*}
which implies \eqref{eq:res_guarantee}.
\vspace{-0.5cm}
\section{Proof of Theorem \ref{thm:set_invariance}}\label{sec:proof_theorem_set_inv}
  Let $\mathcal{S}_i=\mathscr{F}_i(y_{\mathcal{N}_i}^{\max})$, and define a network assume-guarantee contract with
  \begin{subequations}\label{eq:inv_contract}
    \begin{align}
     \phi_{ae}^i=&(x_i(0)\in\mathcal{S}_i)\wedge  \square\left(d_i\in\mathcal{D}_i\right)\nonumber\\
    &\wedge\square\left(u_i=k_i(x_i,y_{\mathcal{N}_i},d_i)\right), \label{eq:inv_ae}\\
    \phi_{af}^i =& \varphi_{af}^i(T) = \square_{[0,T]}\left|y_{\mathcal{N}_i}\right|\le y_{\mathcal{N}_i}^{\max},\label{eq:inv_af}\\
    \phi_{g}^i =& \varphi_{g}^i(\hat{T}) = \square_{[0,\hat{T}]}{x_i\in\mathcal{S}_i};\label{eq:inv_g}
    \end{align}
  \end{subequations}
where $k_i$ is the feedback law that keeps $x_i$ within $\mathcal{S}_i$. By the definition of an RCI, the existence of $k_i$ is guaranteed. Let $\hat{\Lambda}(T)=T+T_s$, $\Gamma(\hat{T})=\hat{T}$, where $T_s$ is the time step of the discrete dynamics in \eqref{eq:dynamic_equation}.

Among the 4 assumptions of Theorem \ref{thm:ag}, Assumption 1 is satisfied by the definition of an RCI. With \eqref{eq:validity}, Assumption 2 is satisfied with $\Gamma$ defined above. Assumption 3 is satisfied by \eqref{eq:inv_ae} and Assumption 4 is satisfied by setting $T=0$ in \eqref{eq:inv_af}. Then, by Theorem \ref{thm:ag}, the guarantee for the network system is
  $\hat{\phi}_g^i=\bigwedge\limits_{k=0}^{\infty}{\square_{[0,k\cdot T_s]}{x_i\in\mathcal{S}_i}}$, which is simplified to
$
  \forall~ i=1,...,N, \square_{[0,\infty)}{x_i\in\mathcal{S}_i}.
$

\balance
\bibliographystyle{myieeetran}
\bibliography{Grid_bib}

\end{document}